\newtheorem{theorem}{Theorem}
\newtheorem{algorithm}[theorem]{Algorithm}
\newtheorem{corollary}[theorem]{Corollary}
\newenvironment{proof}{ \textbf{Proof:} }{ \hfill $\Box$}
\newcommand{\figref}[1]{{Fig.}~\ref{#1}}
\def\bb0{{\mathbb{0}}}
\def\ba{{\mathbf{a}}}
\def\bb{{\mathbf{b}}}
\def\bff{{\mathbf{f}}}
\def\bg{{\mathbf{g}}}
\def\bm{{\mathbf{m}}}
\def\bn{{\mathbf{n}}}
\def\br{{\mathbf{r}}}
\def\bs{{\mathbf{s}}}
\def\bw{{\mathbf{w}}}
\def\bx{{\mathbf{x}}}
\def\by{{\mathbf{y}}}
\def\bz{{\mathbf{z}}}
\def\b0{{\mathbf{0}}}
\def\bA{{\mathbf{A}}}
\def\bB{{\mathbf{B}}}
\def\bE{{\mathbf{E}}}
\def\bF{{\mathbf{F}}}
\def\bG{{\mathbf{G}}}
\def\bH{{\mathbf{H}}}
\def\bI{{\mathbf{I}}}
\def\bQ{{\mathbf{Q}}}
\def\bR{{\mathbf{R}}}
\def\bS{{\mathbf{S}}}
\def\bT{{\mathbf{T}}}
\def\bW{{\mathbf{W}}}
\def\bY{{\mathbf{Y}}}
\def\bbE{{\mathbb{E}}}
\def\cA{\mathcal{A}}
\def\cF{\mathcal{F}}
\def\cG{\mathcal{G}}
\def\cI{\mathcal{I}}
\def\cN{\mathcal{N}}
\def\cO{\mathcal{O}}
\def\cR{\mathcal{R}}
\def\cW{\mathcal{W}}
\def\sf0{{\mathsf{0}}}
\def\vec{\mathrm{vec}~}
\newcommand{\pinv}[1]{\ensuremath{#1^{\dagger}}} 	
\def\Nc{N_\mathrm{RF}}
\def\Fbb{\bF_\mathrm{BB}}
\def\Frf{\bF_\mathrm{RF}}
\def\Wbb{\bW_\mathrm{BB}}
\def\Wrf{\bW_\mathrm{RF}}
\newcommand{\sref}[1]{{Section}~\ref{#1}}
\begin{document}
\IEEEoverridecommandlockouts
\title{Channel Estimation and Hybrid Precoding for Millimeter Wave Cellular Systems\thanks{This material is based upon work supported by the National Science Foundation under Grant No. 1218338 and 1319556.}}
\author{Ahmed Alkhateeb$^{\dag}$, Omar El Ayach$^{\dag}$, Geert Leus$^{\ddag}$, and Robert W. Heath Jr.$^{\dag}$\\
$^\dag$ The University of Texas at Austin, Email: $\{$aalkhateeb, oelayach,  rheath$\}$,@utexas.edu\\
$^\ddag$ Delft University of Technology, Email: {g.j.t.leus@tudelft.nl}}

\maketitle

\begin{abstract}
 Millimeter wave (mmWave) cellular systems will enable gigabit-per-second data rates thanks to the large bandwidth available at mmWave frequencies. To realize sufficient link margin, mmWave systems will employ directional beamforming with large antenna arrays at both the transmitter and receiver. Due to the high cost and power consumption of gigasample mixed-signal devices, mmWave precoding will likely be divided among the analog and digital domains. The large number of antennas and the presence of analog beamforming requires the development of mmWave-specific channel estimation and precoding algorithms. This paper develops an adaptive algorithm to estimate the mmWave channel parameters that exploits the poor scattering nature of the channel. To enable the efficient operation of this algorithm, a novel hierarchical multi-resolution codebook is designed to construct training beamforming vectors with different beamwidths. For single-path channels, an upper bound on the estimation error probability using the proposed algorithm is derived, and some insights into the efficient allocation of the training power among the adaptive stages of the algorithm are obtained. The adaptive channel estimation algorithm is then extended to the multi-path case relying on the sparse nature of the channel. Using the estimated channel, this paper proposes a new hybrid analog/digital precoding algorithm that overcomes the hardware constraints on the analog-only beamforming, and approaches the performance of digital solutions. Simulation results show that the proposed low-complexity channel estimation algorithm achieves comparable precoding gains compared to exhaustive channel training algorithms. The results also illustrate that the proposed channel estimation and precoding algorithms can approach the coverage probability achieved by perfect channel knowledge even in the presence of interference.
\end{abstract}

\section{Introduction} \label{sec:intro}

Millimeter wave (mmWave) communication  is a promising technology for future outdoor cellular systems~\cite{pi2011introduction,.11ad,.13c, Rapp5G}. Directional precoding with large antenna arrays appears to be inevitable to support longer outdoor links and to provide sufficient received signal power. Fortunately, large antenna arrays can be packed into small form factors at mmWave frequencies \cite{Antenna1, Antenna2}, making it feasible to realize the large arrays needed for high precoding gains. The high power consumption of mixed signal components, however, makes digital baseband precoding impossible \cite{pi2011introduction}. Moreover, the design of the precoding matrices is usually based on complete channel state information, which is difficult to achieve in mmWave due to the large number of antennas and the small signal-to-noise ratio (SNR) before beamforming. Because of the additional hardware constraints when compared with conventional microwave frequency multiple-input multiple-output (MIMO) systems, new channel estimation and precoding algorithms that are tailored to mmWave cellular systems must be developed.

To overcome the radio frequency (RF) hardware limitations, analog beamforming solutions were proposed in~\cite{Wang1, .13c, chen2011multi, Multilevel, Tsang}. The main idea is to control the phase of the signal transmitted by each antenna via a network of analog phase shifters. Several solutions, known as beam training algorithms, were proposed to iteratively design the analog beamforming coefficients in systems without channel knowledge at the transmitter. In \cite{Wang1,.13c, chen2011multi, Multilevel}, adaptive beamwidth beamforming algorithms and multi-stage codebooks were developed by which the transmitter and receiver jointly design their beamforming vectors. In \cite{Tsang}, multiple beams with unique signatures were simultaneously used to minimize the required beam training time. Despite the reduced complexity of \cite{Wang1, .13c, chen2011multi, Multilevel, Tsang, Zhang}, they generally share the disadvantage of converging towards only one communication beam. Hence, these techniques are not capable of achieving multiplexing gains by sending multiple parallel streams. Moreover, the performance of analog strategies such as those in \cite{Wang1,.13c, chen2011multi, Multilevel} is sub-optimal compared with digital precoding solutions due to (i) the constant amplitude constraint on the analog phase shifters, and (ii) the potentially low-resolution signal phase control.

To achieve larger precoding gains, and to enable precoding multiple data streams, \cite{Zhang, Venkat,ayach2013spatially,alkhateeb} propose to divide the precoding operations between the analog and digital domains. In \cite{Zhang}, the joint analog-digital precoder design problem was considered for both spatial diversity and multiplexing systems. First, optimal unconstrained RF pre-processing signal transformations followed by baseband precoding matrices were proposed, and then closed-form sub-optimal approximations when RF processing is constrained by variable phase-shifters were provided. In \cite{Venkat}, hybrid analog/digital precoding algorithms were developed to minimize the received signal's mean-squared error in the presence of interference when phase shifters with only quantized phases are available. The work in \cite{Zhang, Venkat}, however, was not specialized for mmWave systems, and did not account for mmWave channel characteristics. In \cite{ayach2013spatially}, the mmWave channel's sparse multi-path structure~\cite{rappaport2012cellular, murdock201238, zhang2010channel,Ben-Dor, sayeed2007maximizing}, and the algorithmic concept of basis pursuit, were leveraged in the design of low-complexity hybrid precoders that attempt to approach capacity assuming perfect channel knowledge is available to the receiver. In \cite{ayach2013spatially,alkhateeb}, the hybrid precoding design problem was considered in systems where the channel is partially known at the transmitter. While the developed hybrid precoding algorithms in \cite{Zhang, ayach2013spatially,alkhateeb} overcome the RF hardware limitations and can support the transmission of multiple streams, the realization of these gains require some knowledge about the channel at the transmitter prior to designing the precoding matrices. This motivates developing multi-path mmWave channel estimation algorithms, which enable hybrid precoding to approach the performance of the digital precoding algorithms.

In this paper, we develop low-complexity channel estimation and precoding algorithms for a mmWave system with large antenna arrays at both the base station (BS) and mobile station (MS). These algorithms account for practical assumptions on the mmWave hardware in which (i) the analog phase shifters have constant modulus and quantized phases, and (ii) the number of RF chains is limited, i.e., less than the number of antennas. The main contributions of the paper can be summarized as follows:

\begin{itemize}
\item{We propose a new formulation for the mmWave channel estimation problem. This formulation captures the sparse nature of the channel, and enables leveraging tools developed in the adaptive compressed sensing (CS) field to design efficient estimation algorithms for mmWave channels.}
\item{We design a novel multi-resolution codebook for the training precoders. The new codebook relies on joint analog/digital processing to generate beamforming vectors with different beamwidths, which is critical for proper operation of the adaptive channel estimation algorithms presented in the paper.}
\item{We design an adaptive CS based algorithm that efficiently estimates the parameters of mmWave channels with a small number of iterations, and with high success probability. The advantage of the proposed algorithm over prior beam training work appears in multi-path channels where our algorithm is able to estimate channel parameters. Hence, it enables multi-stream multiplexing in mmWave systems, while prior work \cite{tsang2011successive,Multilevel,chen2011multi,Wang1,van2002optimum} was limited to the single-beam training and transmission.}
\item{We analyze the performance of the proposed algorithm in single-path channels. We derive an upper bound on the error probability in estimating channel parameters, and find sufficient conditions on the total training power and its allocation over the adaptive stages of the algorithm to estimate the channel parameters with a certain bound on the maximum error probability.}
\item{We propose a new hybrid analog/digital precoding algorithm for mmWave channels. In the proposed algorithm, instead of designing the precoding vectors as linear combinations of the steering vectors of the known angles of arrival/departure as assumed in \cite{ayach2013spatially}, our design depends only on the quantized beamsteering directions to directly approximate the channel's dominant singular vectors. Hence, it implicitly considers the hardware limitations, and more easily generalizes to arbitrary antenna arrays.}
\item{We evaluate the performance of the proposed estimation algorithm by simulations in a mmWave cellular system setting, assuming that both the BS and MS adopt hybrid precoding algorithms.}
\end{itemize}

Simulation results indicate that the precoding gains given by the proposed channel estimation algorithm are close to that obtained when exhaustive search is used to design the precoding vectors. Multi-cell simulations show that the spectral efficiency and coverage probability achieved when hybrid precoding is used in conjunction with the proposed channel estimation strategy are comparable to that achieved when perfect channel knowledge and digital unconstrained solutions are assumed.

The rest of the paper is organized as follows. In \sref{sec:Model}, we present the system model and main assumptions used in the paper. In \sref{sec:Prob_Form}, we formulate the sparse channel estimation problem and present the idea of the proposed adaptive training/estimation algorithm. A hierarchical multi-resolution codebook for the  training precoders and combiner is then designed in \sref{sec:codebook}. Adaptive channel estimation algorithms are presented and discussed in \sref{sec:Algorithm}. The precoding design problem is formulated and a proposed hybrid RF/baseband precoding solution is presented in \sref{sec:Design}. In \sref{sec:Results}, simulation results demonstrating the performance of the proposed algorithms are given, before concluding the paper in  \sref{sec:conclusion}.

We use the following notation throughout this paper: $\bA$ is a matrix, $\ba$ is a vector, $a$ is a scalar, and $\cA$ is a set. $|\bA|$ is the determinant of $\bA$, $\|\bA \|_F$ is its Frobenius norm, whereas $\bA^T$, $\bA^H$, $\bA^*$, $\bA^{-1}$, $\pinv{\bA}$ are its transpose, Hermitian (conjugate transpose), conjugate, inverse, and pseudo-inverse respectively. $[\bA]_{\cR,:}$ $([\bA]_{:,\cR})$ are the rows (columns) of the matrix $\bA$ with indices in the set $\cR$, and $\mathrm{diag}(\ba)$ is a diagonal matrix with the entries of $\ba$ on its diagonal. $\bI$ is the identity matrix and $\mathbf{1}_{N}$ is the $N$-dimensional all-ones vector. $ \bA \circ \bB$ is the Khatri-Rao product of $\bA$, and $\bB$, $ \bA \otimes \bB$ is the Kronecker product of $\bA$, and $\bB$, and $\bA \odot \bB$ denotes the Hadamard product of $\bA$, and $\bB$. $\cN(\bm,\bR)$ is a complex Gaussian random vector with mean $\bm$ and covariance $\bR$. $\bbE\left[\cdot\right]$ is used to denote expectation.

\section{System Model} \label{sec:Model}

\begin{figure} [t]
\centerline{
\includegraphics[width=.65\columnwidth]{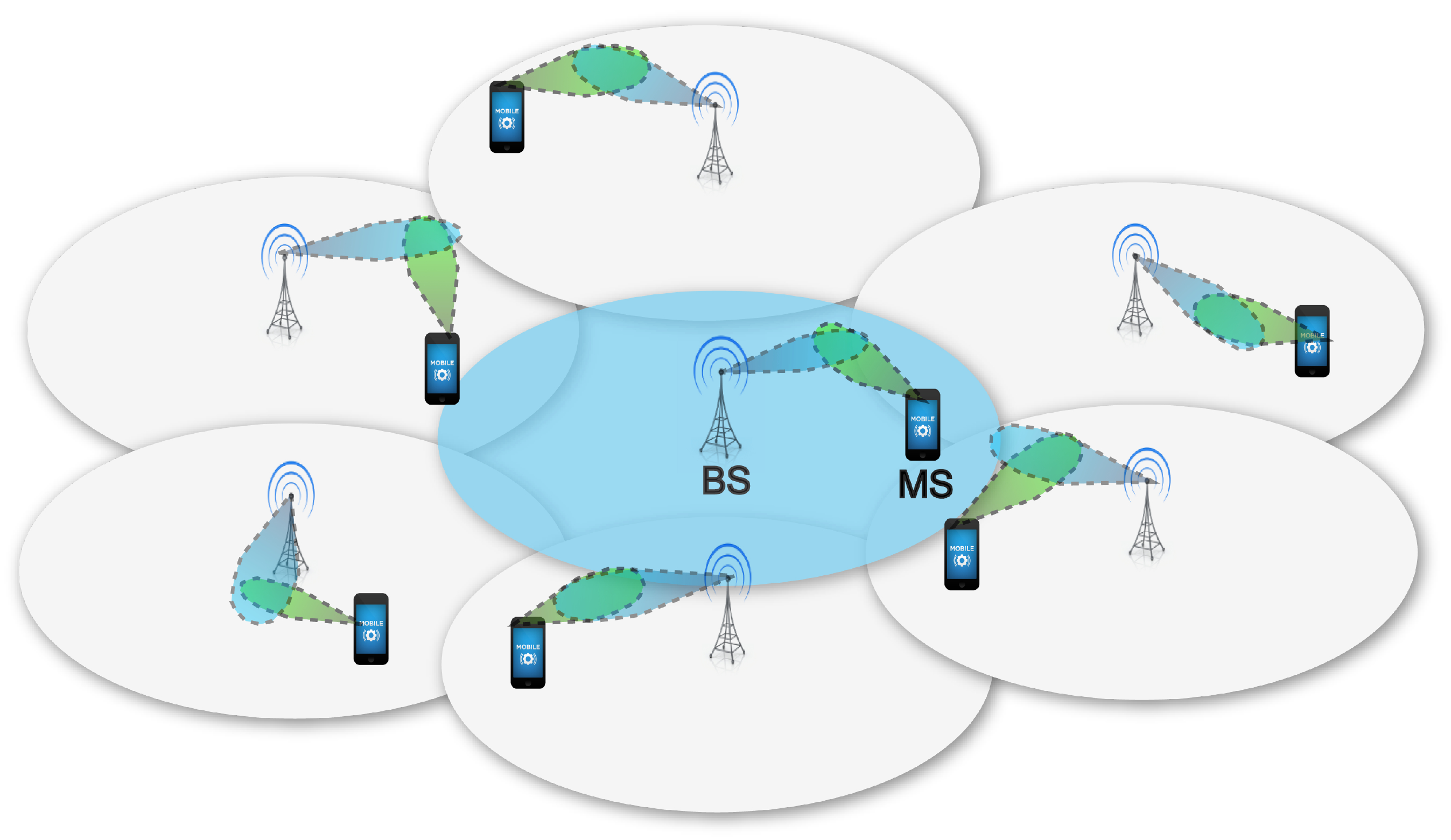}
}
\caption{A mmWave cellular system model, in which BSs and MSs communicate via directive beamforming using large antenna arrays}
\label{fig:ModelFig}
\end{figure}

Consider the mmWave cellular system shown in \figref{fig:ModelFig}. A BS with $N_\mathrm{BS}$ antennas and $\Nc$ RF chains is assumed to communicate with a single MS with $N_\mathrm{MS}$ antennas and $\Nc$ RF chains as shown in \figref{fig:BS_MU_arch_Fig}. The number of RF chains at the MSs is usually less than that of the BSs in practice, but we do not exploit this fact in our model. The BS and MS communicate via $N_\mathrm{S}$ data streams, such that $N_\mathrm{S} \leq \Nc \leq N_\mathrm{BS}$ and $N_\mathrm{S} \leq \Nc \leq N_\mathrm{MS}$~\cite{ayach2013spatially, samsung-practical-sdma-60GHz, xia2008multi}.

In this paper, we will focus on the downlink transmission. The BS is assumed to apply an $\Nc \times N_\mathrm{S}$ baseband precoder $\Fbb$ followed by an $N_\mathrm{BS} \times \Nc$ RF precoder, $\Frf$. If $\bF_\mathrm{T} =\Frf \Fbb$ is the $N_{BS} \times N_\mathrm{S}$ combined BS precoding matrix, the discrete-time transmitted signal is then

\begin{equation}
\bx=\bF_\mathrm{T} \bs,
\label{eq:signal_transmitted}
\end{equation}
where $\bs$ is the $N_\mathrm{S} \times 1$ vector of transmitted symbols, such that $\bbE\left[\bs\bs^H\right] = \frac{P_\mathrm{S}}{N_\mathrm{S}} \bI_{N_\mathrm{S}}$, and $P_\mathrm{S}$ is the average total transmit power. Since $\Frf$ is implemented using analog phase shifters, its entries are of constant modulus. We normalize these entries to satisfy  $\left|\left[\Frf\right]_{m,n}\right|^2=N_\mathrm{BS}^{-1}$, where $\left|\left[\Frf\right]_{m,n}\right|$ denotes the magnitude of the $(m,n)$th element of $\Frf$. The total power constraint is enforced by normalizing $\Fbb$ such that $\|\Frf \Fbb\|_F^2=N_\mathrm{S}$.

\begin{figure}
\centerline{
\includegraphics[width=1\columnwidth, height= .4\textwidth]{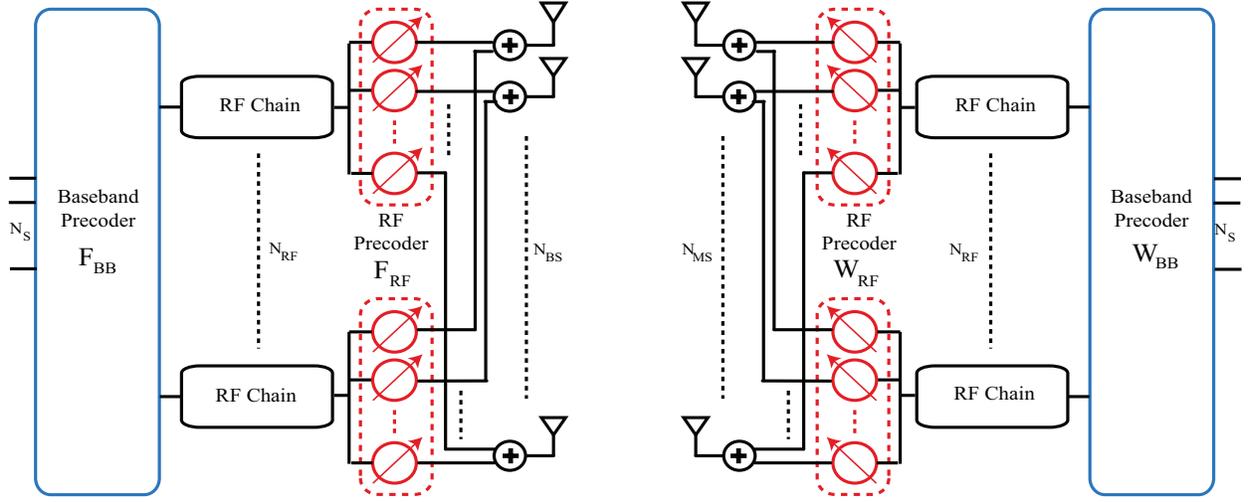}
}
\caption{Block diagram of BS-MS transceiver that uses RF and baseband beamformers at both ends.}
\label{fig:BS_MU_arch_Fig}
\end{figure}

We adopt a narrowband block-fading channel model in which an MS observes the received signal
\begin{equation}
\br= \bH \bF_\mathrm{T} \bs + \bn,
 \label{eq:received_signal}
\end{equation}
where $\bH$ is the $N_\mathrm{MS} \times N_\mathrm{BS}$ matrix that represents the mmWave channel between the BS and MS, and $\bn \sim \cN (0, \sigma^2 )$ is the Gaussian noise corrupting the received signal.

At the MS, the combiner $\bW_\mathrm{T}$ composed of the RF and baseband combiners $\Wrf$ and $\Wbb$ is used to process the received signal $\br$ which results in
\begin{equation}
\by= {\bW_\mathrm{T}}^H \bH\bF_\mathrm{T} \bs + {\bW_\mathrm{T}}^H \bn.
\label{eq:received_signal2}
\end{equation}

We will explain the proposed algorithms for the downlink model. The same algorithms, however, can be directly applied to the uplink system whose input-output relationship is identical to (\ref{eq:received_signal2}) with $\bH$ replaced by the uplink channel, and the roles of the precoders ($\Frf$, $\Fbb$) and combiners ($\Wrf$, $\Wbb$) switched.

While the mmWave channel estimation and precoding algorithms developed in the following sections consider only a BS-MS link with no interfering BSs, these algorithms will also be numerically evaluated by simulations in the case of mmWave cellular systems where out-of-cell interference exists in \sref{subsec:cellular_setup}.

Since mmWave channels are expected to have limited scattering~\cite{rappaport2012cellular, murdock201238, zhang2010channel,Ben-Dor, sayeed2007maximizing}, we adopt a geometric channel model with $L$ scatterers. Each scatterer is further assumed to contribute a single propagation path between the BS and MS~\cite{ayach2013spatially,raghavan2009multi}. Under this model, the channel $\bH$ can be expressed as
\begin{align}
\bH = \sqrt{\frac{N_\mathrm{BS} N_\mathrm{MS}}{\rho}} \sum_{\ell=1}^{L}\alpha_\ell \ba_\mathrm{MS}\left(\theta_\ell\right) \ba^H_\mathrm{BS}\left(\phi_\ell \right),
\label{eq:channel_model}
\end{align}
where $\rho$ denotes the average path-loss between the BS and MS, and $\alpha_\ell$ is the complex gain of the $\ell^\mathrm{th}$ path. The path amplitudes are assumed to be Rayleigh distributed, i.e., $\alpha_\ell \sim \cN\left(0, \bar{P}_\mathrm{R}\right), \ell=1,2,..., L$ with $\bar{P}_\mathrm{R}$ the average power gain. The variables $\phi_\ell \in [0, 2\pi]$ and $\theta_\ell \in [0, 2\pi]$ are the $\ell^\mathrm{th}$ path's azimuth angles of departure or arrival (AoDs/AoAs) of the BS and MS, respectively. Considering only the azimuth, and neglecting elevation, implies that all scattering happens in azimuth and that the BS and MS implement horizontal (2-D) beamforming only. Extensions to 3-D beamforming are possible~\cite{ayach2013spatially}. Finally, $\ba_\mathrm{BS}\left(\phi_\ell\right)$ and $\ba_\mathrm{MS}\left(\theta_\ell\right)$ are the antenna array response vectors at the BS and MS, respectively. While the algorithms and results developed in the paper can be applied to arbitrary antenna arrays, we use uniform linear arrays (ULAs), in the simulations of \sref{sec:Results}. If a ULA is assumed, $\ba_\mathrm{BS}\left(\phi_\ell\right)$ can be written as
\begin{equation}
\ba_\mathrm{BS}\left(\phi_\ell\right)= \frac{1}{\sqrt{N_\mathrm{BS}}} \left[ 1, e^{j {\frac{2 \pi}{\lambda}} d\sin\left(\phi_\ell\right)}, \hdots  ,e^{j\left(N_\mathrm{BS} -1\right){\frac{2\pi}{\lambda}}d\sin\left(\phi_\ell\right)} \right]^T,  \label{eq:received_signal3}
\end{equation}
where $\lambda$ is the signal wavelength, and $d$ is the distance between antenna elements. The array response vectors at the MS, $\ba_\mathrm{MS}\left(\theta_\ell\right)$, can be written in a similar fashion.

The channel in \eqref{eq:channel_model} is written in a more compact form as
\begin{equation}
\bH=\bA_\mathrm{MS} \mathrm{diag}\left(\boldsymbol\alpha\right) \bA_\mathrm{BS}^H, \label{eq:channel2}
\end{equation}
where $\boldsymbol\alpha=\sqrt{\frac{N_\mathrm{BS} N_\mathrm{MS}}{\rho}}  \left[\alpha_1, \alpha_2, ... ,\alpha_{L}\right]^T$. The matrices
\begin{align}
\bA_\mathrm{BS} &=\left[\ba_\mathrm{BS}\left({\phi}_1\right),\  \ba_\mathrm{BS}\left({\phi}_2\right),\ ... ,\ \ba_\mathrm{BS}\left({\phi}_{L}\right) \right],
\end{align}
and
\begin{align}
\bA_\mathrm{MS} & =\left[\ba_\mathrm{MS}\left({\theta}_1\right),\  \ba_\mathrm{MS}\left({\theta}_2\right),\  ...,\  \ba_\mathrm{MS}\left({\theta}_{L}\right) \right],
\end{align}
contain the BS and MS array response vectors.

In this paper, we assume that both the BS and MS have no a priori knowledge of the channel. Hence, in the first part of the paper, namely, \sref{sec:Prob_Form}-\sref{sec:Algorithm}, the mmWave channel estimation problem is formulated, and an adaptive CS based algorithm is developed and employed at the BS and MS to solve it. In the second part, i.e., \sref{sec:Design}, the estimated channel is used to construct the hybrid precoding and decoding matrices.

\section{Formulation of the MmWave Channel Estimation Problem} \label{sec:Prob_Form}
Given the geometric mmWave channel model in \eqref{eq:channel_model}, estimating the mmWave channel is equivalent to estimating the different parameters of the $L$ channel paths; namely the AoA, the AoD, and the gain of each path. To do that accurately and with low training overhead, the BS and MS need to carefully design their training precoders and combiners. In this section, we exploit the poor scattering nature of the mmWave channel, and formulate the mmWave channel estimation problem as a sparse problem. We will also briefly show how adaptive CS work invokes some ideas for the design of the training precoders and combiners. Inspired by these ideas, and using the hybrid analog/digital system architecture, we will develop a novel hierarchical multi-resolution codebook for the training beamforming vectors in \sref{sec:codebook}. We will then propose algorithms that adaptively use the developed codebook to estimate the mmWave channel along with evaluating their performance in \sref{sec:Algorithm}.


\subsection{A Sparse Formulation of the MmWave Channel Estimation Problem} \label{subsec:formulation}

Consider the system and mmWave channel models described in \sref{sec:Model}. If the BS uses a beamforming vector $\bff_p$, and the MS employs a measurement vector $\bw_q$ to combine the received signal, the resulting signal can be written as
\begin{equation}
y_{q,p}={\bw^H_q} \bH \bff_p s_p+ \bw_q^H \bn_{q,p},
\end{equation}
where $s_p$ is the transmitted symbol on the beamforming vector $\bff_p$, such that $\bbE\left[s_p s_p^H\right] = P$, with $P$ the average power used per transmission in the training phase. In \sref{sec:codebook}, we will develop a hybrid analog/digital design for the beamforming/measurement vectors, $\bff_p$ and $\bw_q$. If $M_\mathrm{MS}$ such measurements are performed by the MS vectors $\bw_q, q=1,2,...,M_\mathrm{MS}$ at $M_\mathrm{MS}$ successive instants to detect the signal transmitted over the beamforming vector $\bff_p$, the resulting vector will be
\begin{equation}
\by_p= \bW^H \bH \bff_p s_p+\mathrm{diag}{\left(\bW^H \left[\bn_{1,p},...,\bn_{M_\mathrm{MS},p}\right]\right)},
\end{equation}
where $\bW=\left[\bw_1, \bw_2, ..., \bw_{M_\mathrm{MS}}\right]$ is the  $N_\mathrm{MS}\times M_\mathrm{MS}$ measurement matrix. If the BS employs $M_\mathrm{BS}$ such beamforming vectors $\bff_p, p=1,..., M_\mathrm{BS}$, at $M_\mathrm{BS}$ successive time slots, and the MS uses the same measurement matrix $\bW$ to combine the received signal, the resultant matrix can then be written by concatenating the $M_\mathrm{BS}$ processed vectors $\by_p, p=1,2,...,M_\mathrm{BS}$
\begin{equation}
\bY=\bW^H \bH \bF \bS+\bQ,
\end{equation}
where $\bF=\left[\bff_1,\bff_2,..., \bff_{M_\mathrm{BS}}\right]$ is the $N_\mathrm{BS}\times M_\mathrm{BS}$ beamforming matrix used by the BS, and $\bQ$ is an $M_\mathrm{MS} \times M_\mathrm{BS}$ noise matrix given by concatenating the $M_\mathrm{BS}$ noise vectors. The matrix $\bS$ is a diagonal matrix carrying the $M_\mathrm{BS}$ transmitted symbols $s_p, p=1,...,M_\mathrm{BS}$ on its diagonal. For the training phase, we assume that all transmitted symbols are equal, namely, $\bS=\sqrt{P} \bI_{M_\mathrm{BS}}$ and therefore

\begin{equation}
\bY=\sqrt{P} \bW^H \bH \bF+\bQ.
\end{equation}

To exploit the sparse nature of the channel, we first vectorize the resultant matrix $\bY$
\begin{align}
\by_\mathrm{v} &=\sqrt{P}  \vec{\left(\bW^H \bH \bF\right)}+\vec{\left(\bQ\right)} \\
&\stackrel{(a)}{=}\sqrt{P} \left(\bF^T \otimes \bW^H \right)\vec(\bH)+\bn_\mathrm{Q} \\
&\stackrel{(b)}{=}\sqrt{P} \left(\bF^T \otimes \bW^H \right) \left({{\bA^*_\mathrm{BS}} \circ \bA_\mathrm{MS}}\right)\boldsymbol\alpha+\bn_\mathrm{Q}, \label{eq:vec_signal}
\end{align}
where $(a)$ follows from \cite[Theorem 13.26]{laub2004matrix}, $(b)$ follows from the channel model in \eqref{eq:channel2}, and the properties of the Khatri-Rao product, \cite{van2002optimum}. The matrix $\left( \bA_\mathrm{BS}^* \circ \bA_\mathrm{MS} \right)$ is an $N_\mathrm{BS} N_\mathrm{MS} \times L$ matrix in which each column has the form $\left( \ba_\mathrm{BS}^*\left(\phi_\ell\right) \otimes \ba_\mathrm{MS}\left(\theta_\ell\right) \right), \ell=1,2,..., L$, i.e., each column $\ell$ represents the Kronecker product of the BS and MS array response vectors associated with the AoA/AoD of the $\ell$th path of the channel.

To complete the problem formulation, we assume that the AoAs, and AoDs are taken from a uniform grid of $N$ points, with $N \gg L$, i.e., we assume that $\phi_\ell, \theta_\ell \in \left\{0,\frac{ 2 \pi}{N},..., \frac{2 \pi (N-1)}{N} \right\}, \ell=1,2,..., L$. As the values of the AoAs/AoDs are actually continuous, other off-grid based algorithms like sparse regularized total least squared \cite{sparse_Leus}, continuous basis pursuit \cite{ekanadham2011recovery}, or Newton refinement ideas \cite{ramasamy2013compressive} can be incorporated to reduce the quantization error. In this paper, we consider only the case of quantized AoAs/AoDs, leaving possible improvements for future work. We evaluate the impact of this quantization error on the performance of the proposed algorithms in this paper by numerical simulations in \sref{sec:Results}.

By neglecting the grid quantization error, we can approximate $\by_\mathrm{v}$ in \eqref{eq:vec_signal} as
\begin{align}
\by_\mathrm{v} = \sqrt{P} \left(\bF^T \otimes \bW^H \right) \bA_\mathrm{D} \bz+\bn_\mathrm{Q}, \label{eq:sparse_formulation}
\end{align}
where $\bA_\mathrm{D}$ is an $N_\mathrm{BS} N_\mathrm{MS} \times N^2$ dictionary matrix that consists of the $N^2$ column vectors of the form $\left( \ba_\mathrm{BS}^*\left(\bar{\phi}_u\right) \otimes \ba_\mathrm{MS}\left(\bar{\theta}_v\right)\right)$, with $\bar{\phi}_u$, and $\bar{\theta}_v$ the $u$th, and $v$th points, respectively, of the angles uniform grid, i.e, $\bar{\phi}_u = \frac{2 \pi u}{N}, u=0,2,...,N-1$, and $\bar{\theta}_v= \frac{2 \pi v}{N}, v=0,2,...,N-1$. $\bz$ is an $N^2 \times 1$ vector which carries the path gains of the corresponding quantized directions. Note that detecting the columns of $\bA_\mathrm{D}$ that correspond to non-zero elements of $\bz$, directly implies the detection of the AoAs and AoDs of the dominant paths of the channel. The path gains can be also determined by calculating the values of the corresponding elements in $\bz$.

The formulation of the vectorized received signal $\by_\mathrm{v}$ in \eqref{eq:sparse_formulation} represents a sparse formulation of the channel estimation problem as $\bz$ has only $L$ non-zero elements and $L \ll N^2$. This implies that the number of required measurements, $M_\mathrm{BS} M_\mathrm{MS}$, to detect the non-zero elements of $\bz$ is much less than $N^2$. In other words, this means that the BS does not need to transmit along each vector defined in the dictionary, nor does the MS need to observe signals using its entire codebook. Given this formulation in \eqref{eq:sparse_formulation}, CS tools can be leveraged to design estimation algorithms to determine the quantized AoAs/AoDs. If we define the sensing matrix $\boldsymbol\Psi$ as $\boldsymbol\Psi= \left(\bF^T \otimes \bW^H \right) \bA_\mathrm{D}$, the objective of the CS algorithms will be to efficiently design this sensing matrix to guarantee the recovery of the non-zero elements of the vector $\bz$ with high probability, and with a small number of measurements. One common criterion for that is the restricted isometry property (RIP), which requires the matrix $\boldsymbol\Psi^H \boldsymbol\Psi$ to be close to diagonal on average \cite{Rossi}.

To simplify the explanation of the BS-MS beamforming vectors' design problem in the later chapters, we prefer to use the Kronecker product properties and write \eqref{eq:sparse_formulation} as \cite{van2002optimum}
\begin{align}
\by_\mathrm{v}& = \sqrt{P} \left(\bF^T \bA_\mathrm{BS,D}^*  \otimes \bW^H \bA_\mathrm{MS,D} \right) \bz+\bn_\mathrm{Q} \\
&=\sqrt{P} \bF^T \bA_\mathrm{BS,D}^* \bz_\mathrm{BS} \otimes \bW^H \bA_\mathrm{MS,D} \bz_\mathrm{MS}+\bn_\mathrm{Q},  \label{eq:sparse_formulation3}
\end{align}
where $\bz_\mathrm{BS}$, and $\bz_\mathrm{MS}$ are two $N \times 1$ sparse vectors that have non-zero elements in the locations that correspond to the AoDs, and AoAs, respectively. $\bA_\mathrm{BS,D}$, and $\bA_\mathrm{MS,D}$ are $N_\mathrm{BS} \times N$, and $N_\mathrm{MS} \times N$ dictionary matrices that consist of column vectors of the forms $\ba_\mathrm{BS}\left(\bar{\phi}_u\right)$, and $\ba_\mathrm{MS}\left(\bar{\theta}_u\right)$, respectively.

In the standard CS theory, the number of measurement vectors required to guarantee the recovery of the $L$-sparse vector with high probability is of order $\cO(L\log(N/L))$ \cite{donoho2006compressed}. While these results are theoretically proved, their implementations to specific applications and the development of efficient algorithms require further work. We therefore resort to adaptive CS tools which invoke some ideas for the design of the training beamforming vectors.

\subsection{Adaptive Compressed Sensing Solution}\label{subsec:adaptive_sol}

In adaptive CS \cite{malloy2012near,malloy2012near2,Iwen}, the training process is divided into a number of stages. The training precoding, and measurement matrices used at each stage are not determined a priori, but rather depend on the output of the earlier stages. More specifically, if the training process is divided into $\mathrm{S}$ stages, then the vectorized received signals of these stages are
\begin{align}
\begin{split}
& \by_{(1)} =  \sqrt{P_{(1)}} \left(\bF_{(1)}^T \bA_\mathrm{BS,D}^*  \otimes \bW_{(1)}^H \bA_\mathrm{MS,D} \right) \bz+\bn_\mathrm{1} \\
& \by_{(2)} =  \sqrt{P_{(2)}} \left(\bF_{(2)}^T \bA_\mathrm{BS,D}^*  \otimes \bW_{(2)}^H \bA_\mathrm{MS,D} \right) \bz+\bn_\mathrm{2} \\
&\hspace{125pt} \vdots \\
&\by_{(\mathrm{S})} =  \sqrt{P_{(\mathrm{S})}} \left(\bF_{(\mathrm{S})}^T \bA_\mathrm{BS,D}^*  \otimes \bW_{(\mathrm{S})}^H \bA_\mathrm{MS,D} \right) \bz+\bn_\mathrm{S}
\end{split}\label{eq:sparse_formulation2}
\end{align}

The design of the $s$th stage training precoders and combiners, $\bF_{(\mathrm{s})}, \bW_{(\mathrm{s})}$, depends on $\by_{(1)}, \by_{(2)}$, $ ...$, $\by_{(s-1)}$. Recent research in \cite{malloy2012near,malloy2012near2,Iwen} shows that adaptive CS algorithms yield better performance than standard CS tools at low SNR, which is the typical case at mmWave systems before beamforming. Moreover, these adaptive CS ideas that rely on successive bisections provide important insights that can be used in the design of the training beamforming vectors.

In our proposed channel estimation algorithm described in \sref{sec:Algorithm}, the training beamforming vectors are adaptively designed based on the bisection concept. In particular, the algorithm starts initially by dividing the vector $\bz$ in \eqref{eq:sparse_formulation2} into a number of partitions, which equivalently divides the AoAs/AoDs range into a number of intervals, and design the training precoding and combining matrices of the first stage, $\bF_{(\mathrm{1})}, \bW_{(\mathrm{1})}$, to sense those partitions. The received signal $\by_{(1)}$ is then used to determine the partition(s) that are highly likely to have non-zero element(s) which are further divided into smaller partitions in the later stages until detecting the non-zero elements, the AoAs/AoDs, with the required resolution. If the number of BS precoding vectors used in each stage of the adaptive algorithm equals $K$, where $K$ is a design parameter, then the number of adaptive stages needed to detect the AoAs/AoDs with a resolution $\frac{2 \pi}{N}$ is $\mathrm{S}=\log_K{N}$, which we assume to be integer for ease of exposition. Before delving into the details of the algorithm, we will focus in the following section on the design of a multi-resolution beamforming codebook which is essential for the proper operation of the adaptive channel estimation algorithm.

\section{Hybrid Precoding Based Multi-Resolution Hierarchical Codebook} \label{sec:codebook}
In this section, we present a novel hybrid analog/digital based approach for the design of a multi-resolution beamforming codebook. Besides considering the RF limitations, namely, the constant amplitude phase shifters with quantized phases, the proposed approach for constructing the beamforming vectors is general for ULAs/non-ULAs, has  a very-low complexity, and outperforms the analog-only beamforming codebooks thanks to its additional digital processing layer.

The design of a multi-resolution or variant beamwidth beamforming vector codebook has been studied before in \cite{tsang2011successive,Multilevel,chen2011multi,Wang1,van2002optimum}. This prior work focused on analog-only beamforming vectors, and on the physical design of the beam patterns. Unfortunately, the design of analog-only multi-resolution codebooks is subject to practical limitations in mmWave. (1) The existence of quantized phase shifters makes the design of non-overlapping beam patterns difficult, and may require an exhaustive search over a large space given the large number of antennas. (2) The design of analog-only beamforming vectors with certain beamwidths relies mostly on the beamsteering beam patterns of ULAs, and is hard to apply for non-ULAs due to the lack of intuition about their beam patterns.

To simplify explaining the codebook structure and design, we focus on the design of the BS training precoding codebook $\cF$; a similar approach can be followed to construct the MS training codebook $\cW$.

\subsection{Codebook Structure}\label{subsec:Structure}
The proposed hierarchical codebook consists of $\mathrm{S}$ levels, $\cF_s, s=1,2,..., \mathrm{S}$. Each level contains beamforming vectors with  a certain beamwidth to be used in the corresponding training stage of the adaptive mmWave channel estimation algorithm. \figref{fig:HP_codebook} shows the first three levels of an example codebook with $N=256$, and $K=2$, and \figref{fig:HP_codebook2} illustrates the beam patterns of the beamforming vectors of each codebook level.

\begin{figure}
\centerline{
\includegraphics[width=5 in, height= .3\textwidth]{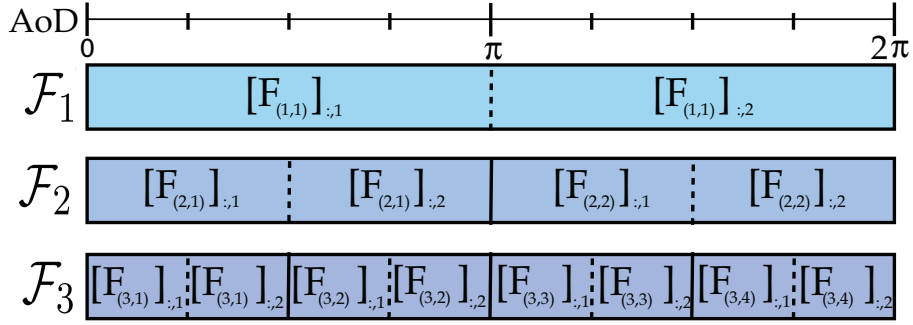}
}
\caption{An example of the structure of a multi-resolution codebook with a resolution parameter $N=8$, and with $K=2$ beamforming vectors in each subset.}
\label{fig:HP_codebook}
\end{figure}

\begin{figure}
\centerline{
\includegraphics[ height= .3\textwidth]{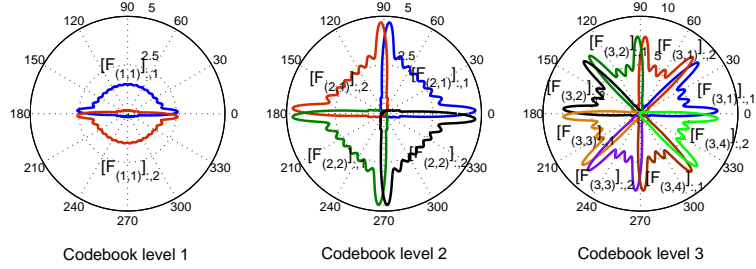}
}
\caption{The resulting beam patterns of the beamforming vectors in each codebook level.}
\label{fig:HP_codebook2}
\end{figure}

In each codebook level $s$, the beamforming vectors are divided into $K^{s-1}$ subsets, with $K$ beamforming vectors in each of them. Each subset $k$, of the codebook level $s$ is associated with a unique range of the AoDs equal to $\{\frac{2 \pi u}{N}\}_{u \in \cI_{\left(s,k\right)}}$, where $\cI_{\left(s,k\right)}=\left\{\frac{(k-1)N}{K^{s-1}},...,\frac{kN}{K^{s-1}}\right\}$. This AoD range is further divided into $K$ sub-ranges, and each of the $K$ beamforming vectors in this subset is designed so as to have an almost equal projection on the vectors $\ba_\mathrm{BS}\left(\bar{\phi}_u\right)$, with $u$ in this sub-range, and zero projection on the other vectors. Physically, this implies the implementation of a beamforming vector with a certain beamwidth determined by these sub-ranges, and steered in pre-defined directions.
\subsection{Design of the Codebook Beamforming Vectors}\label{subsec:codebook_design}

In each codebook level $s$, and subset $k$, the beamforming vectors $\left[\bF_{(s,k)}\right]_{:,m}, m=1,2,...,K$ are designed such that
\begin{equation}
\left[\bF_{(s,k)}\right]_{:,m}^H \ba_\mathrm{BS}\left(\bar{\phi}_u\right) = \left\{\begin{array}{ll} C_s & \mbox{if } u \in \cI_{(s,k,m)} \\ 0 & \mbox{if } u \not\in \cI_{(s,k,m)} \end{array}\right.,
\label{eq:Des1}
\end{equation}
where $\cI_{(k,s,m)}=\left\{\frac{N}{K^{s}}\left(K(k-1)+m-1\right)+1,..., \frac{N}{K^{s}}\left(K(k-1)+m\right)\right\}$ defines the sub-range of AoDs associated with the beamforming vector $\left[\bF_{(s,k)}\right]_{:,m}$, and $C_s$ is a normalization constant that satisfies $\|\bF_{(s,k)}\|_F=K$. For example, the beamforming vector $\left[\bF_{(2,1)}\right]_{:,1}$ in \figref{fig:HP_codebook} is designed such that it has a constant projection on the array response vectors $\ba_\mathrm{BS}\left(\bar{\phi}_u\right)$, $u$ is in $\left\{0, 1, ..., 63\right\}$, i.e., $\bar{\phi}_u$ is in $\left\{0, ..., 2 \pi \frac{63}{256}\right\}$, and zero projection on the other directions.

In a more compact form, we can write the design objective of the beamforming vectors $\bF_{(s,k)}$ in \eqref{eq:Des1} as the solution of
\begin{equation}
\bA_\mathrm{BS,D}^H \bF_{(s,k)}=C_s \bG_{(s,k)},
\label{eq:Des2}
\end{equation}
where $\bG_{(s,k)}$ is an $N \times K$ matrix where each column $m$ containing $1's$ in the locations $u, u \in \cI_{(s,k,m)}$, and zeros in the locations $u, u \not\in \cI_{(s,k,m)}$.
Now, we note that the BS AoDs matrix $\bA_\mathrm{BS,D}$ is an over-complete dictionary with $N \geq N_\mathrm{BS}$, i.e., \eqref{eq:Des2} represents an inconsistent system of which the approximate solution is given by $\bF_{(s,k)}=C_s (\bA_\mathrm{BS,D} \bA_\mathrm{BS,D}^H)^{-1} \bA_\mathrm{BS,D} \bG_{(s,k)}$. Further, given the available system model in \sref{sec:Model},
the precoding matrix $\bF_{(s,k)}$ is defined as $\bF_{(s,k)}=\bF_{\mathrm{RF},(s,k)} \bF_{\mathrm{BS},(s,k)}$. As each beamforming vector will be individually used in a certain time instant, we will design each of them independently in terms of the hybrid analog/digitl precoders. Consequently, the design of the hybrid analog and digital training precoding matrices is accomplished by solving

\begin{align}
\begin{split}
\left\{\bF_{\mathrm{RF},(s,k)}^{\star}, \left[\bF_{\mathrm{BB},(s,k)}^{\star}\right]_{:,m}\right\} &  =  \arg\min \ \  \|\left[\bF_{(s,k)}\right]_{:,m} -\bF_{\mathrm{RF},(s,k)}\left[\bF_{\mathrm{BB},(s,k)}\right]_{:,m}\|_F, \\
& \hspace{15pt} \mathrm{s.t}. \ \  \left[\bF_{\mathrm{RF},(s,k)}\right]_{:,i} \in \left\{\left[\bA_\mathrm{can}\right]_{:,\ell} | \ 1 \leq \ell \leq N_\mathrm{can}\right\}, i=1,2,..., N_\mathrm{RF} \\
& \hspace{39pt} \|\bF_{\mathrm{RF},(s,k)}\left[\bF_{\mathrm{BB},(s,k)}\right]_{:,m}\|_F^2=1,
\label{eq:BF_Design}
\end{split}
\end{align}
where $\left[\bF_{(s,k)}\right]_{:,m}=C_s (\bA_\mathrm{BS,D} \bA_\mathrm{BS,D}^H)^{-1} \bA_\mathrm{BS,D} \left[ \bG_{(s,k)}\right]_{:,m}$, and $\bA_\mathrm{can}$ is an $N_\mathrm{BS} \times N_\mathrm{can}$ matrix which carries the finite set of possible analog beamforming vectors. The columns of the candidate matrix
$\bA_\mathrm{can}$ can be chosen to satisfy arbitrary analog beamforming constraints. Two example candidate beamformer designs we consider in the simulations of Section \ref{sec:Results}
are summarized as follows.
\begin{enumerate}
\item Equally spaced ULA beam steering vectors~\cite{ayach2013spatially}, i.e., a set of $N_\mathrm{can}$ vectors of the form $\ba_{BS}(\frac{t_\mathrm{can} \pi}{N})$ for $t_\mathrm{can}=0,\ 1,\ 2,\ \hdots, N_\mathrm{can}-1$.
\item {Beamforming vectors whose elements can be represented as quantized phase shifts. In the case of quantized phase shifts, if each phase shifter is controlled by an $N_\mathrm{Q}$-bit input, the entries of the
candidate precoding matrix $\bA_\mathrm{can}$ can all be written as $e^{j\frac{k_\mathrm{Q} 2\pi}{2^{N_\mathrm{Q}}}}$ for some $k_\mathrm{Q}=0,\ 1,\ 2,\ \hdots, 2^{N_\mathrm{Q}}-1$.}
\end{enumerate}

Now, given the matrix of possible analog beamforming vectors $\bA_\mathrm{can}$, the optimization problem in \eqref{eq:BF_Design} can be reformulated as a sparse approximation problem \cite{ayach2013spatially,alkhateeb}
\begin{align}
\begin{split}
\left[\bF_{\mathrm{BB},(s,k)}^{\star}\right]_{:,m} &  =  \arg\min \ \  \|\left[\bF_{(s,k)}\right]_{:,m}-\bA_\mathrm{can} \left[\bF_{\mathrm{BB},(s,k)}\right]_{:,m}\|_F, \\
&\hspace{15pt}  \mathrm{s.t}. \ \ \|\mathrm{diag}\left({\left[\bF_{\mathrm{BB},(s,k)}\right]_{:,m} \left[\bF_{\mathrm{BB},(s,k)}\right]^H_{:,m}}\right)\|_{\ell_0}=\Nc.\\
& \hspace{39pt} \|\bF_{\mathrm{RF},(s,k)}\left[\bF_{\mathrm{BB},(s,k)}\right]_{:,m}\|_F^2=1.
\label{eq:BF_Design2}
\end{split}
\end{align}

Note that the first constraint in \eqref{eq:BF_Design2} ensures that only $\Nc$ rows of $\left[\bF_{\mathrm{BB},(s,k)}\right]_{:,m}$ can be non-zeros. Hence, after the design of the baseband training precoder
using this sparse problem, the columns of $\bA_\mathrm{can}$ that correspond to the non-zero rows of $\left[\bF_{\mathrm{BB},(s,k)}\right]_{:,m}$ are chosen to be the RF precoder $\bF_{\mathrm{RF},(s,k)}$. 

The exact solution of the sparse approximation problem in \eqref{eq:BF_Design2} requires solving a combinatorial optimization problem of high complexity. Hence, following \cite{ayach2013spatially},
we develop an orthogonal matching pursuit algorithm to iteratively solve this problem as shown in Algorithm \ref{alg3}. Also, note that the constant $C_s$ is not known a priori, and
should be ideally maximized as it is proportional to the beamforming gain as indicated in \eqref{eq:Des1}. However, and for the sake of a low-complexity solution, we will assume that it
is a constant, and calculate its value after the design of the beamforming vectors to normalize them as shown in Algorithm \ref{alg3}.

\begin{figure}
\centerline{
\includegraphics[ height= .3\textwidth]{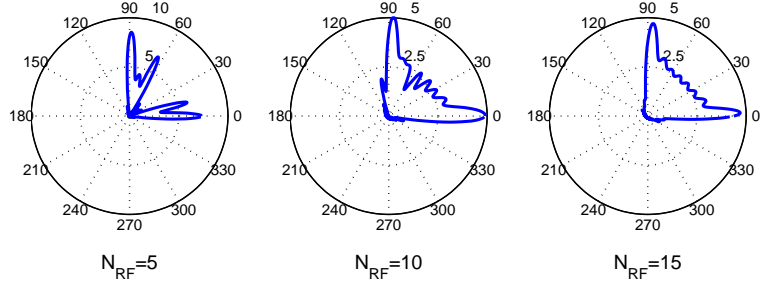}
}
\caption{Beam patterns approximation with different numbers of RF chains.}
\label{fig:RF_patterns}
\end{figure}

\begin{algorithm} [!t]                     
\caption{Hybrid Analog-Digital Training Precoders Design}          
\label{alg3}                           
\begin{algorithmic}                    
    \State $\cR = \phi$
    \State $\bff_\mathrm{res}$=$(\bA_\mathrm{BS,D} \bA_\mathrm{BS,D}^H)^{-1} \bA_\mathrm{BS,D}  \left[\bG_{(s,k)}\right]_{:,m}$
    \State $\bff^\star=(\bA_\mathrm{BS,D} \bA_\mathrm{BS,D}^H)^{-1} \bA_\mathrm{BS,D} \left[ \bG_{(s,k)}\right]_{:,m}$
    \For{$i \leq \Nc$}
        \State $\boldsymbol\Phi = \bff_\mathrm{res}^H \bA_\mathrm{can} $
        \State $n=\arg\max_{n=1,2,..N_\mathrm{can}} \left[\boldsymbol\Phi^H \boldsymbol\Phi\right]_{i,i}$
        \State $\cR = \cR \cup n$
        \State $\bF_{\mathrm{RF},(s,k)} = \left[\bA_\mathrm{can}\right]_{:,\cR}$
        \State $\left[\bF_{\mathrm{BB},(s,k)}\right]_{:,m}={\left({\bF_{\mathrm{RF},(s,k)}}^H \bF_{\mathrm{RF},(s,k)}\right)}^{-1} \bF_{\mathrm{RF},(s,k)}^H \bff^\star$
        \State $\bff_\mathrm{res}=\frac{\bff_\mathrm{res}-\bF_{\mathrm{RF},(s,k)} \left[\bF_{\mathrm{BB},(s,k)}\right]_{:,m}}{\|\bff_\mathrm{res}-\bF_{\mathrm{RF},(s,k)} \left[\bF_{\mathrm{BB},(s,k)}\right]_{:,m}\|_F}$
    \EndFor
        \State $C_s=\sqrt{\frac{1}{\|\bF_{\mathrm{RF},(s,k)} \left[\bF_{\mathrm{BB},(s,k)}\right]_{:,m}\|_F}}$
        \State $\left[\bF_{\mathrm{BB},(s,k)}\right]_{:,m}=C_s \left[\bF_{\mathrm{BB},(s,k)}\right]_{:,m}$
    \end{algorithmic}
\end{algorithm}

In summary, Algorithm \ref{alg3} starts by finding the vector $\left[\bA_\mathrm{can}\right]_{:,l}$ along which the matrix $\bF^\star$ has the maximum projection.
It then appends the selected column vector $\left[\bA_\mathrm{can}\right]_{:,n}$ to the RF precoder $\bF_{\mathrm{RF},(s,k)}$. After the dominant vector is found and the least squares solution to $\left[\bF_{\mathrm{BB},(s,k)}\right]_{:,m}$ is
calculated, the contribution of the selected vector is removed and the algorithm proceeds to find the column along which the ``residual precoding matrix'' $\bF_\mathrm{res}$ has the largest projection. The process continues
until all $\Nc$ beamforming vectors have been selected. At the end of the $\Nc$ iterations, the algorithm would have: (i) constructed an $N_\mathrm{BS} \times \Nc$ RF training beamforming matrix $\bF_{\mathrm{RF},(s,k)}$,
(ii) found the baseband training precoder $\left[\bF_{\mathrm{BB},(s,k)}\right]_{:,m}$ which minimizes the objective in (\ref{eqn:combinationprecoding}), and (iii) calculated the value of the constant $C_s$. It is worth mentioning here that if the K BS training vectors defined by $\bF_{(s,k)}$ will not be jointly used as suggested by the formulation in \eqref{eq:sparse_formulation}, then Algorithm \ref{alg3} should be used to separately design each vector alone to approximate the corresponding vector of $\bF^\star$. This yields a better approximation as all the RF chains will be explicitly used for this vector in the approximation problem.

An example of the beam patterns resulting from applying the proposed algorithm is shown in \figref{fig:RF_patterns}. These patterns are generated by a BS has $32$ antennas, and a number of RF chains $N_\mathrm{RF}=5,10,15$ to approximate the beamforming vectors $\left[\bF_{(2,1)}\right]_{:,1}$ shown in \figref{fig:HP_codebook2}.

After the design of the BS training beamforming vectors for the $k$th subset of the $s$th codebook, the following quantities are calculated, as they will be used after that in the channel estimation algorithm in \sref{sec:Algorithm}:
\begin{itemize}
\item{\textbf{Beamforming Gain}: Given the channel model in \eqref{eq:channel_model}, and the codebook beamforming design criteria in \eqref{eq:Des1}, we define the beamforming gain of the BS training vectors at the $s$th stage as
$G_{(s)}^\mathrm{BS}=N_\mathrm{BS} C_s^2$. A similar definition can be used for the MS beamforming vectors, yielding a total training beamforming gain at the $s$th stage equal to $G_{(s)}=G_{(s)}^\mathrm{BS} G_{(s)}^\mathrm{MS}$.
}
\item{\textbf{Error Matrix}: As the system in \eqref{eq:Des2} is inconsistent, the solution given by the pseudo-inverse means that $\bA_\mathrm{BS,D}^H \bF_{(s,k)}$ may not be exactly equal to $C_s \bG_{(s,k)}$.
Moreover, the limitations of the RF beamforming vectors, and the approximate solution of the sparse approximation problem in \eqref{eq:BF_Design2} results in an additional error in satisfying \eqref{eq:Des2}. This error physically means (i) the existence of a spectral leakage of the beamforming vectors outside their supposed AoD sub-ranges, and (ii) the beamforming gain is not exactly uniform over the desired AoD ranges. To take the effect of this error into the performance analysis of the proposed channel estimation algorithm in \sref{sec:Algorithm}, we define the error matrix of each subset $k$ of the $s$th BS beamforming codebook level as
\begin{equation}
\bE_{(s,k)}^\mathrm{BS}=\bA_\mathrm{BS,D}^H \bF_{(s,k)} - C_s \bG_{(s,k)}.
\end{equation}

As a similar error exists in the MS combining codebook, we can define the final error experienced by the received vector $\by_s$ in \eqref{eq:sparse_formulation3} after applying the Kronecker product as
\begin{equation}
\bE_{(s,k_\mathrm{BS},k_\mathrm{MS})}={\bE_{(s,k_\mathrm{BS})}^\mathrm{BS}}^T \otimes {\bE_{(s,k_\mathrm{MS})}^\mathrm{MS}}^T+{\bE_{(s,k_\mathrm{BS})}^\mathrm{BS}}^T \otimes C_s^\mathrm{MS}  \bG_{(s,k_\mathrm{MS})}^T+ C_s^\mathrm{BS} \bG_{(s,k_\mathrm{BS})}^T \otimes {\bE_{(s,k_\mathrm{MS})}^\mathrm{MS}}^T.
\end{equation}

Now, if we also define a new matrix $\bG_{(s,k_\mathrm{BS},k_\mathrm{MS})}=\bG_{(s,k_\mathrm{BS})} \otimes \bG_{(s,k_\mathrm{MS})}$, then we can rewrite
 $\by_s$ in \eqref{eq:sparse_formulation2} as follows assuming the subsets $k_\mathrm{BS}$ of $\cF_s$, and $k_\mathrm{MS}$ of $\cW_s$ are used at the BS and MS
\begin{align}
\by_{(s)} &= \sqrt{P_{(s)}} \left(\bF_{(s)}^T \otimes \bW_s^H \right) \bA_\mathrm{D} \bz+\bn_\mathrm{Q} \\
&= \sqrt{P_{(s)}} \left( \bF_{(s)}^T \bA_\mathrm{BS,D} \otimes \bW_{(s)}^H \bA_\mathrm{MS,D} \right)\bz+\bn_\mathrm{Q} \\
&= \sqrt{P_{(s)}} \left( \sqrt{\frac{G_{(s)}}{N_\mathrm{BS} N_\mathrm{MS}}}\bG_{(s,k_\mathrm{BS},k_\mathrm{MS})}+\bE_{(s,k_\mathrm{BS},k_\mathrm{MS})} \right)\bz+\bn_\mathrm{Q}.\label{eq:sparse4}
\end{align}
}

\item{\textbf{Forward and Backward Gains}: To include the effect of the previously defined error matrix on the beamforming gain, we will define the forward and backward gains
of the designed beamforming vectors. First, note that $\bG_{(s,k_\mathrm{BS},k_\mathrm{MS})}$ is a $K^2 \times N^2$ matrix in which each row corresponds to a certain pair of the
precoding/measurement vectors, and each column corresponds to a certain quantized AoA/AoD pair. We denote each AoA/AoD pair as a direction $d$, $d=1,2,...,N^2$. Now, we note that if a certain direction $d$ lies in the AoAs/AoDs range defined by a certain precoder/measurement vector $m, m=1,2,...,K^2$, then the entry $\left[\bG_{(s,k_\mathrm{BS},k_\mathrm{MS})}\right]_{m,d}=1$ to indicate that this direction lies in the main lobe of the patterns of both the precoder and measurement vectors. We also notice that each column $d$ of the matrix $\bG_{(s,k_\mathrm{BS},k_\mathrm{MS})}$ contains only one non-zero value, equal to 1, as each direction can not lie in the main lobe, i.e., AoAs/AoDs range, defined by more than one precoding/measurement vectors; thanks to the non-overlapping design of the beamforming vectors in \eqref{eq:Des1}.

We can then define the forward gain in the direction $d, d=1,2,..., N^2$, when the subsets $k_\mathrm{BS}, k_\mathrm{MS}$ of the codebooks $\cF_s, \cW_s$ are used as
\begin{equation}
G^\mathrm{F}_{(s,k_\mathrm{BS},k_\mathrm{MS},m,d)}=\left|\sqrt{G_{(s)}}+\sqrt{N_\mathrm{BS}N_\mathrm{MS}} \left[\bE_{s,k_\mathrm{BS},k_\mathrm{MS}}\right]_{m(d),d}\right|^2,
\end{equation}
where $m(d)$ is defined as $m(d) \in \left\{{m=1,2,..,K^2|\left[\bG_{(s,k_\mathrm{BS},k_\mathrm{MS})}\right]_{m,d}=1}\right\}$ which corresponds to only one value, i.e., one precoding/measurement pair, as described earlier.

To define the backward gain in a certain direction $d$, we need to specify also the beamforming/measurement pair $\bar{m}\in \cG_{\bar{m}}=\left\{m=1,2,..,K^2|\left[\bG_{(s,k_\mathrm{BS},k_\mathrm{MS})}\right]_{m,d}=0\right\}$.
Hence, we define the backward gain as
\begin{equation}
G^\mathrm{B}_{(s,k_\mathrm{BS},k_\mathrm{MS},d,\bar{m})}=N_\mathrm{BS}N_\mathrm{MS} \left|\left[\bE_{s,k_\mathrm{BS},k_\mathrm{MS}}\right]_{\bar{m},d}\right|^2.
\end{equation}

Finally, we define the ratio between the forward and backward beamforming gains in a certain direction, $d$, due to precoding/measurment pairs $m(d), \bar{m}$ as
\begin{equation}
\beta_{(s,k_\mathrm{BS},k_\mathrm{MS},d,\bar{m})}=\frac{G^\mathrm{F}_{(s,k_\mathrm{BS},k_\mathrm{MS},d)}}{G^\mathrm{B}_{(s,k_\mathrm{BS},k_\mathrm{MS},d,\bar{m})}}.
\end{equation}
}
\end{itemize}

Note that one disadvantage of the proposed approach for constructing the beamforming vectors is the shown ripples in the main lobes in \figref{fig:HP_codebook2}. This comes from the approximate solution of the inconsistent system in \eqref{eq:Des2}, and from the fact that we design over a finite set of directions in $\bA_\mathrm{BS,D}$. These patterns, however, are acceptable for the \textit{sparse} channel estimation problem that we consider. The main reason is that this ripple is in the main lobe, while the side lobes in these patterns are very small. If the channel has only one path in a certain direction $d$, then it will be affected by only one sample of this main lobe, which is in the direction $d$. Hence, this ripple in the main lobe just affects the forward beamforming gain. As we will show in the analysis of the proposed adaptive channel estimation algorithm in \sref{sec:Algorithm}, the performance of the proposed algorithm depends mainly on the ratio of the backward to forward gains. Therefore, the small side lobes, i.e., backward gains, greatly reduce the impact of these fluctuations in the main lobe on the overall performance of the proposed sparse channel estimation algorithms.

\section{Adaptive Estimation Algorithms for MmWave Channels}\label{sec:Algorithm}

In this section, we consider the sparse channel estimation problem formulated in \eqref{eq:sparse_formulation2} of \sref{sec:Prob_Form}, and propose algorithms that adaptively use the hierarchical codebook
developed in \sref{sec:codebook} to estimate the mmWave channel. We firstly address this problem for the rank-one channel model, i.e., when the channel has only one-path, in \sref{subsec:single-path}. We then extend the proposed algorithm for the multi-path case in \sref{subsec:multi-path}.
\subsection{Adaptive Channel Estimation Algorithm for Single-Path MmWave Channels}\label{subsec:single-path}
Given the problem formulation in \eqref{eq:sparse_formulation2}, the single-path channel implies that the vector $\bz$ has only one non-zero element. Hence, estimating the single-path channel
is accomplished by determining the location of this non-zero element, which in turn defines the AoA/AoD, and the value of this element, which decides the channel path gain.
To efficiently do that with low training overhead, we propose Algorithm \ref{alg1} which adaptively searches for the non-zero element of $\bz$ by using the multi-resolution beamforming vectors designed in \sref{sec:codebook}.

\begin{algorithm} [!t]                     
\caption{Adaptive Estimation Algorithm for Single-Path MmWave Channels}          
\label{alg1}                           
\begin{algorithmic}                    
    \State \textbf{Input:} BS and MS know $N,K$, and have $\cF, \cW$.
    \State \textbf{Initialization:} $k_1^\mathrm{BS}=1, k_1^\mathrm{MS}=1$  // Initialize the subsets to be used of codebooks $\cF, \cW$
    \State $\mathrm{S}=\log_K{N}$ // The number of adaptive stages
    \For{$s \leq \mathrm{S}$}
        \For {$m_\mathrm{BS} \leq K$}
            \State BS transmits a training symbol using $\left[\bF_{(s,k_s^\mathrm{BS})}\right]_{:,m_\mathrm{BS}}$
            \For {$m_\mathrm{MS} \leq K$}
                \State MS makes a measurement using $\left[\bW_{(s,k_s^\mathrm{MS})}\right]_{:,m_\mathrm{MS}}$
            \EndFor
            \State After MS measurements: $\by_{m_\mathrm{BS}}= \sqrt{P_{s}}\left[\bW_{(s,k_s^\mathrm{MS})}\right] \bH \left[\bF_{(s,k_s^\mathrm{BS})}\right]_{:,m_\mathrm{BS}}+\bn_{m_\mathrm{BS}}$
        \EndFor
        \State $\bY_{(s)}=[\by_1, \by_2, ..., \by_K]$
        \State $\left(m_\mathrm{BS}^{\star},m_\mathrm{MS}^{\star}\right)=\arg\max_{ \forall m_\mathrm{BS}, m_\mathrm{MS}=1,2,..., K} \left[\bY_{(s)} \odot \bY_{(s)}^* \right]_{m_\mathrm{MS},m_\mathrm{BS}}$
        \State $k_{s+1}^\mathrm{BS}=K(m_\mathrm{BS}^\star-1)+1, k_{s+1}^\mathrm{MS}=K(m_\mathrm{MS}^\star-1)+1$
    \EndFor
    \State $\hat{\phi}=\bar{\phi}_{{k_{\mathrm{S}+1}^\mathrm{BS}}}, \hat{\theta}=\bar{\theta}_{k_{\mathrm{S}+1}^{\mathrm{MS}}}$
    \State $\hat{\alpha}=\sqrt{\frac{\rho}{P_{(\mathrm{S})}G_\mathrm{(S)} }}{\left[\bY_{(\mathrm{S})}\right]_{m_\mathrm{MS}^{\star},m_\mathrm{BS}^{\star}}}$
    \end{algorithmic}
\end{algorithm}

Algorithm \ref{alg1} operates as follows. In the initial stage, the BS uses the $K$ training precoding vectors of the first level of the codebook $\cF$ in \sref{sec:codebook}. For each of those vectors, the MS uses the $K$ measurement vectors of the first level of $\cW$ to combine the received signal. Note that the first level of the hierarchical codebook in \sref{sec:codebook} has only one subset of beamforming vectors. After the $K^2$ precoding-measurement steps of this stage, the MS compares the power of the $K^2$ received signals to determine the one with the maximum received power. As each one of the precoding/measurement vectors is associated with a certain range of the quantized AoA/AoD, the operation of the first stage divides the vector $\bz$ in \eqref{eq:sparse_formulation2} into $K^2$ partitions, and compares between the power of the sum of each of them. Hence, the selection of the maximum power received signal implies the selection of the partition of $\bz$, and consequently the range of the quantized AoA/AoD, that is highly likely to contain the single path of the channel. The output of the maximum power problem is then used to determine the subsets of the beamforming vectors of level $s+1$ of $\cF$, and $\cW$ to be used in the next stage. The MS then feeds back the selected subset of the BS precoders to the BS to use it in the next stage, which needs only $\log_2{K}$ bits. As the beamforming vectors of the next levels have higher and higher resolution, the AoA/AoD ranges are further refined adaptively as we proceed in the algorithm stages until the desired resolution, $\frac{2 \pi}{N}$, is achieved. Note that the training powers in the $\mathrm{S}$ stages are generally different as will be discussed shortly.

Based on the proposed algorithm, the total number of stages required to estimate the AoA/AoD with a resolution $\frac{2 \pi}{N}$ is $\log_K{N}$. Also, since we need $K$ beamforming vectors, and $K$ measurement vectors for each beamforming vector in each stage, the total number of steps needed to estimate the mmWave channel using the proposed algorithm becomes $K^2 \log_K{N}$ steps. Moreover, since $\Nc$ RF chains can be simultaneously used at the MS to combine the measurements, the required number of steps can be further reduced to be $ K\lceil\frac{K}{\Nc}\rceil \log_K{N}$.

In the following theorem, we characterize the performance of the proposed algorithm for the case of single dominant path channels, i.e., assuming that the channel model in \eqref{eq:channel_model} has $L=1$. We find an upper bound of the probability of error in estimating the AoA/AoD with a certain resolution using Algorithm \ref{alg1}. We will then use
Theorem \ref{th:first} to derive sufficient conditions on the total training power and its distribution over the adaptive stages of Algorithm \ref{alg1} to guarantee estimating
the AoA/AoD of the mmWave channel with a desired resolution, and a certain bound on the maximum error probability.

\begin{theorem}
Algorithm \ref{alg1} succeeds in estimating the correct AoA and AoD of the single-path channel model in \eqref{eq:channel_model}, for a desired resolution $\frac{2 \pi}{N}$, with
an average probability of error $\bar{p}$ which is upper bounded by
\begin{equation}
\bar{p} \leq \frac{K^2-1}{2} \sum_{s=1}^\mathrm{S} \left(1-\frac{\left(1-\frac{1}{{{\beta}}_{s}}\right) P_{(s)} {G}^\mathrm{F}_{s} \bar{\gamma}}{4 \sqrt{1+\frac{1}{2}\left(1+\frac{1}{{{\beta}}_{s}} \right)P_{(s)} {G}^\mathrm{F}_{s} \bar{\gamma}+\frac{1}{16}P_{s}^2 {{G}^\mathrm{F}}_{s}^2 \bar{\gamma}^2 \left(1-\frac{1}{{{\beta}}_{s}}\right)^2}}\right),\label{eq:Pe}
\end{equation}
where ${{\beta}}_{s}=\frac{{G}^F_{s}}{{G}^B_{s}}=\min_{\substack{ \forall k_\mathrm{BS},k_\mathrm{MS}=1,2,..., K^{s-1} \\ \forall d=1,2,..., N^2 \\ \forall \bar{m} \in \cG_{\bar{m}}}}{\beta_{(s,k_\mathrm{BS},k_\mathrm{MS},d,\bar{m})}}$, ${G}^\mathrm{F}_{s}$ is the corresponding forward beamforming gain, and $\bar{\gamma}$ is the average channel SNR defined as $\bar{\gamma}=\frac{\bar{P}_R}{\rho \sigma^2}.$
\label{th:first}
\end{theorem}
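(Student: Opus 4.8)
The plan is to combine a union bound over the $\mathrm{S}$ adaptive stages with a union bound, within each stage, over the $K^2-1$ incorrect precoder/measurement cells, and then to evaluate the resulting pairwise comparison probability exactly as the tail of an indefinite Hermitian quadratic form in a complex Gaussian vector. Let $\cE_s$ be the event that Algorithm~\ref{alg1} picks the wrong $(m_\mathrm{BS}^{\star},m_\mathrm{MS}^{\star})$ at stage $s$ given that all previous stages were correct; since a run fails iff some $\cE_s$ occurs, $\bar p\le\sum_{s=1}^{\mathrm{S}}\Pr[\cE_s]$. Conditioned on correct decisions in stages $1,\dots,s-1$, the single path of the $L=1$ channel lies in the AoA/AoD interval handled by the subsets $k_s^\mathrm{BS},k_s^\mathrm{MS}$, so I would invoke \eqref{eq:sparse4} with $\bz$ having a single nonzero entry $z_d=\sqrt{N_\mathrm{BS}N_\mathrm{MS}/\rho}\,\alpha$ in direction $d$. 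Then the correct cell $m(d)$ sees a signal whose power is governed by the forward gain $G^\mathrm{F}_{s}$, each wrong cell $\bar m$ sees a signal governed by the backward gain $G^\mathrm{B}_{s}$, and all cells see the combiner-processed noise; the stage-$s$ error is exactly $\max_{\bar m\ne m(d)}\big|[\bY_{(s)}]_{\bar m}\big|^2>\big|[\bY_{(s)}]_{m(d)}\big|^2$, which by the union bound has probability at most $\sum_{\bar m\ne m(d)}\Pr\big[|[\bY_{(s)}]_{\bar m}|^2>|[\bY_{(s)}]_{m(d)}|^2\big]$.

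The core step is the pairwise comparison for a fixed wrong cell $\bar m$. The pair $\big([\bY_{(s)}]_{m(d)},[\bY_{(s)}]_{\bar m}\big)$ is a zero-mean, jointly circularly-symmetric complex Gaussian vector, jointly Gaussian because both entries are linear in the common Rayleigh gain $\alpha$ and in the additive noise. I would write its $2\times2$ covariance $\bR$: $R_{11}=P_{(s)}\bar{P}_\mathrm{R}G^\mathrm{F}_{s}/\rho+\sigma^2$, $R_{22}=P_{(s)}\bar{P}_\mathrm{R}G^\mathrm{F}_{s}/(\rho\beta_s)+\sigma^2$ using $G^\mathrm{B}_s=G^\mathrm{F}_s/\beta_s$, and an off-diagonal term $R_{12}$ arising from the shared $\alpha$ of magnitude $P_{(s)}\bar{P}_\mathrm{R}G^\mathrm{F}_s/(\rho\sqrt{\beta_s})$. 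The comparison event is $\{Q<0\}$ for $Q=|[\bY_{(s)}]_{m(d)}|^2-|[\bY_{(s)}]_{\bar m}|^2$; diagonalizing, $Q=\mu_1|w_1|^2+\mu_2|w_2|^2$ with $w_1,w_2$ i.i.d.\ standard complex Gaussian and $\mu_1>0>\mu_2$ the two eigenvalues of $\mathrm{diag}(1,-1)\,\bR$, so that
\[
\Pr[Q<0]=\frac{-\mu_2}{\mu_1-\mu_2}=\frac12\left(1-\frac{R_{11}-R_{22}}{\sqrt{(R_{11}+R_{22})^2-4|R_{12}|^2}}\right).
\]
Substituting the entries of $\bR$, dividing through by $\sigma^2$, introducing $\bar\gamma=\bar{P}_\mathrm{R}/(\rho\sigma^2)$, and using the identity $(1+1/\beta_s)^2-4/\beta_s=(1-1/\beta_s)^2$ collapses the radical to the quadratic in $P_{(s)}G^\mathrm{F}_s\bar\gamma$ appearing in \eqref{eq:Pe}; this accounts simultaneously for the leading factor $\tfrac12$ and for the square root of a quadratic in the denominator.

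To finish, I would note that the per-cell bound is monotone in $\beta_s$, so replacing $\beta_s$ (and the corresponding $G^\mathrm{F}_s$) by the worst, i.e.\ minimum, value over all subsets $k_\mathrm{BS},k_\mathrm{MS}$, all directions $d$, and all wrong cells $\bar m$ — exactly the definition of ${\beta}_s$ in the statement — yields a bound uniform in $\bar m$; union-bounding over the $K^2-1$ wrong cells then produces the factor $\tfrac{K^2-1}{2}$, and union-bounding over stages produces $\sum_{s=1}^{\mathrm{S}}$, which is precisely \eqref{eq:Pe}. The step I expect to be the real obstacle is getting $\bR$ right and carrying the quadratic-form evaluation cleanly: pinning down the cross-correlation $R_{12}$ induced by the common path gain, correctly propagating the combiner-processed noise covariance through the hybrid analog/digital stage (the measurement vectors $\bw_q$ need not be orthonormal, so the noise correlation between the two compared cells must either be folded into $\bR$ or bounded away), and verifying that the algebra telescopes to the stated closed form rather than a more opaque eigenvalue expression.
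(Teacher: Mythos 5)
Your proposal is correct and shares the paper's overall skeleton --- a union bound over the $\mathrm{S}$ stages, a union bound over the $K^2-1$ wrong precoder/measurement cells within a stage, and a worst-case replacement of $\beta_{(s,k_\mathrm{BS},k_\mathrm{MS},d,\bar{m})}$ by the minimum $\beta_s$ --- but the central computation is carried out by a genuinely different route. The paper conditions on the path gain $\alpha$, so the two compared outputs become independent Rician-type variables; it invokes the classical Marcum-$\mathrm{Q}$ expression $\mathrm{Q}_1(a,b)-\tfrac{1}{2}\mathrm{I}_0(ab)e^{-(a^2+b^2)/2}$ for $\mathrm{P}(|X|^2<|Y|^2)$, rewrites it via Simon's finite-integral form of the $\mathrm{Q}$-function, performs the $\beta$-monotonicity and union-bound steps inside that integral, and only at the very end averages over the exponential distribution of $|\alpha|^2$ using the standard integral of \cite{simon1998unified} to reach the closed form \eqref{eq:Pe}. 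You instead absorb the complex-Gaussian $\alpha$ into the statistics from the outset, so the pair of outputs is jointly zero-mean Gaussian with a cross-covariance induced by the shared gain, and the pairwise probability is the exact eigenvalue formula for an indefinite Hermitian quadratic form. Your route is shorter, avoids the Marcum-$\mathrm{Q}$/Bessel machinery entirely, and yields the \emph{exact} unconditional pairwise probability rather than an integral that must be bounded and then averaged; what it costs you is that the min-over-$\beta$ step must now be justified on the unconditional expression (which you note), and that the noise covariance across the $K^2$ measurement cells must be handled explicitly --- a point you rightly flag and which the paper silently assumes away by treating the combined noise terms as i.i.d. One caveat on the algebra: carrying your covariance entries through exactly gives the per-pair probability $\tfrac{1}{2}\bigl(1-\tfrac{x(1-1/\beta_s)}{2\sqrt{1+x(1+1/\beta_s)+\tfrac{x^2}{4}(1-1/\beta_s)^2}}\bigr)$ with $x=P_{(s)}G^\mathrm{F}_s\bar\gamma$, whereas the summand in \eqref{eq:Pe} is this same expression with $x$ replaced by $x/2$ (hence the coefficients $\tfrac{1}{2}$ and $\tfrac{1}{16}$ and the $4$ in front of the radical). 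Since the expression is decreasing in $x$, your exact value is the smaller one, so \eqref{eq:Pe} still follows a fortiori --- you in fact prove a slightly tighter bound --- but your algebra does not literally ``collapse to'' \eqref{eq:Pe}; the paper's extra factor of two in the effective SNR traces back to its parameterization of the Marcum-$\mathrm{Q}$ arguments before the Rayleigh averaging.
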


\begin{proof}
If the BS and MS use Algorithm \ref{alg1} to estimate their AoA/AoD with a resolution $\frac{2 \pi}{N}$, and employ $K$ precoding and measurement vectors of codebooks $\cF$ and $\cW$ at each stage, then the output of the first stage can be written as \eqref{eq:sparse4}
\begin{align}
\by_{(1)}&=\sqrt{P_{(1)}}\left( \sqrt{\frac{G_{(1)}}{N_\mathrm{BS} N_\mathrm{MS}}}\bG_{(1,1,1)}+\bE_{(1,1,1)} \right)\bz+\bn_\mathrm{1}\\
&=\sqrt{P_{(1)}}\left[ \begin{array} {ccc} \sqrt{\frac{G_{(1)}}{N_\mathrm{BS} N_\mathrm{MS}}} \sum_{i=1}^{N^2/K^2}{\left[x\right]_i}+\sum_{i=1}^{N^2}{\left[\bE_{(1,1,1)}\right]_{1,i} \left[x\right]_i}\\ \vdots \\ \sqrt{\frac{G_{(1)}}{N_\mathrm{BS} N_\mathrm{MS}}} \sum_{i=(K^2-1)\frac{N^2}{K^2}+1}^{N^2}{\left[x\right]_i}+\sum_{i=1}^{N^2}{\left[\bE_{(1,1,1)}\right]_{K^2,i} \left[x\right]_i}\end{array}\right]+\bn_\mathrm{1}.
\end{align}

Without loss of generality, if we assume that the single non-zero element of $\bz$ is in the first location, then using the definition of the vector $\boldsymbol\alpha$ in \eqref{eq:channel2}, we get
\begin{align}
\by_{(1)}=\left[ \begin{array} {cccc} \sqrt{\frac{P_{(1)} N_\mathrm{BS} N_\mathrm{MS}}{\rho}}\left(\sqrt{\frac{G_{(1)}}{N_\mathrm{BS} N_\mathrm{MS}}}+\left[\bE_{(1,1,1)}\right]_{1,1} \right)\alpha + n_1\\ \sqrt{\frac{P_{(1)} N_\mathrm{BS} N_\mathrm{MS}}{\rho}} \left[\bE_{(1,1,1)}\right]_{2,1} \alpha + n_2 \\ \vdots \\ \sqrt{\frac{P_{(1)} N_\mathrm{BS} N_\mathrm{MS}}{\rho}} \left[\bE_{(1,1,1)}\right]_{K^2,1} \alpha + n_{K^2} \end{array} \right].
\end{align}

To select the partition of $\bz$ with the highest probability to carry the non-zero element, Algorithm \ref{alg1} chooses the partition with the maximum received power. Hence, the probability of successfully estimating the correct AoA/AoD range at this stage is the probability of the event $\bigcap_{r=1}^{K^2}\left\{\left[\by_{(1)}\right]_{1}^2 > \left[\by_{(1)}\right]_{r}^2\right\}$. Taking the complement of this event, and using the union bound, we write the probability of error at stage $s$ conditioned on the channel gain $p_{(s)}(\alpha)$ as
\begin{align}
p_{(s)}(\alpha)&=\mathrm{P}\left(\left.\bigcup_{r=1}^{K^2}{\left\{\left[\by_{(1)}\right]_{1}^2 < \left[\by_{(1)}\right]_{r}^2\right\}} \right| \alpha\right)\\
&\leq \sum_{r=1}^{K^2} \mathrm{P}\left(\left.\left[\by_{(1)}\right]_{1}^2 < \left[\by_{(1)}\right]_{r}^2\right|\alpha\right).\label{eq:Ps}
\end{align}

Now, note that $\left[\by_{(1)}\right]_{1} \sim \cN\left(\mu_1, \sigma^2\right)$ with $\mu_1=\sqrt{\frac{P_{(1)} N_\mathrm{BS} N_\mathrm{MS}}{\rho}}\left(\sqrt{\frac{G_{(1)}}{N_\mathrm{BS} N_\mathrm{MS}}}+\left[\bE_{(1,1,1)}\right]_{1,1} \right)\alpha$, and $\left[\by_{(1)}\right]_{r} \sim \cN\left(\mu_r, \sigma^2\right)$ with and $\mu_r=\sqrt{\frac{P_{(1)} N_\mathrm{BS} N_\mathrm{MS}}{\rho}} \left[\bE_{(1,1,1)}\right]_{r,1} \alpha , r=2,3, ..., K^2$. Using the result of \cite{proakisdigital} for the probability that the difference between the magnitude squares of two Gaussian random variables is less than zero, we reach
\begin{equation}
\mathrm{P}\left(\left.\left[\by_{(1)}\right]_{1}^2 < \left[\by_{(1)}\right]_{r}^2\right|\alpha\right)=\mathrm{Q}_1(a,b)-\frac{1}{2} \mathrm{I}_0(ab)\exp\left(-\frac{1}{2}\left(a^2+b^2\right)\right),
\end{equation}
with $a=\frac{|\mu_r|}{\sqrt{2\sigma^2}}=\sqrt{\frac{P_{(1)} G^\mathrm{F}_{(1,1,1,1)}}{2 \rho \sigma^2}}|\alpha|$, and $b=\frac{|\mu_1|}{\sqrt{2\sigma^2}}=\sqrt{\frac{P_{(1)} G^\mathrm{B}_{(1,1,1,1,r)}}{2 \rho \sigma^2}}|\alpha|$. where $\mathrm{Q}_1$ is the first-order Marcum $\mathrm{Q}$-function, and $\mathrm{I}_0$ is the $0$th order modified Bessel function of the first kind. In \cite{simon1998new,simon1998unified}, a new integral form of the $\mathrm{Q}$-function was derived, by which we get
\begin{align}
\begin{split}
\mathrm{P}\left(\left.\left[\by_{(1)}\right]_{1}^2 < \left[\by_{(1)}\right]_{r}^2\right|\alpha\right)=\frac{1}{4 \pi} & \int_{- \pi}^{\pi}
\frac{1-\frac{1}{\beta_{(1,1,1,1,r)}}}{1+2 \sin(\phi) \sqrt{\frac{1}{\beta_{(1,1,1,1,r)}}}+\frac{1}{\beta_{(1,1,1,1,r)}}} \\
& \hspace{-15pt}\times \exp\left(- \frac{P_{(1)} G^\mathrm{F}_{(1,1,1,1)} |\alpha|}{4 \rho \sigma^2} \left({1+2 \sin(\phi)\sqrt{\frac{1}{\beta_{(1,1,1,1,r)}}}+\frac{1}{\beta_{(1,1,1,1,r)}}}\right) \right) d\phi.
\end{split}\label{eq:Prob1}
\end{align}

We can now substitute by \eqref{eq:Prob1} in \eqref{eq:Ps} to obtain
\begin{align}
\begin{split}
p_{(s)}(\alpha) \leq \frac{1}{4 \pi} & \int_{- \pi}^{\pi} \sum_{r=1}^{K^2}
\frac{1-\frac{1}{\beta_{(1,1,1,1,r)}}}{1+2 \sin(\phi) \sqrt{\frac{1}{\beta_{(1,1,1,1,r)}}}+\frac{1}{\beta_{(1,1,1,1,r)}}} \\
& \hspace{30pt} \times \exp\left(- \frac{P_{(1)} G^\mathrm{F}_{(1,1,1,1)} |\alpha|}{4 \rho \sigma^2} \left({1+2 \sin(\phi)\sqrt{\frac{1}{\beta_{(1,1,1,1,r)}}}+\frac{1}{\beta_{(1,1,1,1,r)}}}\right) \right) d\phi.
\end{split}\label{eq:Prob2}
\end{align}

From \eqref{eq:Prob2}, we can show that $\frac{\partial p_{(s)}(\alpha)}{\partial \beta_{(1,1,1,1,r)}} < 0$ for $ 1 > \beta_{(1,1,1,1,r)} \geq 0$. This is expected as $\beta_{(1,1,1,1,r)}$ represents the forward to backward gain which is intuitively negatively proportional with the probability of error. Hence, we can bound $p_{(s)} (\alpha)$ as
\begin{align}
\begin{split}
p_{(s)}(\alpha) \leq \frac{K^2-1}{4 \pi} & \int_{- \pi}^{\pi}
\frac{1-\frac{1}{\beta_1}}{1+2 \sin(\phi) \sqrt{\frac{1}{\beta_1}}+\frac{1}{\beta_1}} \\
& \hspace{50pt} \times \exp\left(- \frac{P_{(1)} G^\mathrm{F}_1 |\alpha|}{4 \rho \sigma^2} \left({1+2 \sin(\phi)\sqrt{\frac{1}{\beta_1}}+\frac{1}{\beta_1}}\right) \right) d\phi,
\end{split}\label{eq:Prob3}
\end{align}
where ${\beta}_{1}=\min_{\substack{ \forall d=1,2,..., N^2 \\ \bar{m}\in \cG_{\bar{m}}}}{\beta_{(1,1,1,d,\bar{m})}}$, and $G^\mathrm{F}_1$ is the corresponding forward beamforming gain.

Using a similar analysis for each stage $s$, the total probability of error conditioned on the path gain can be now defined, and bounded again using the union bound as
\begin{align}
\begin{split}
p(\alpha)& =\mathrm{P}\left(\bigcup_{s=1}^{\mathrm{S}}\left(\left.\bigcup_{r=2}^{K^2} \left[\by_{(s)}\right]^2_1 < \left[\by_{(s)}\right]^2_r \right| \alpha\right)\right) \\
& \leq \frac{K^2-1}{4 \pi} \sum_{s=1}^{S} \int_{- \pi}^{\pi}
\frac{1-\frac{1}{\beta_s}}{1+2 \sin(\phi) \sqrt{\frac{1}{\beta_s}}+\frac{1}{\beta_s}} \\
& \hspace{100pt} \times \exp\left(- \frac{P_{(s)} G^\mathrm{F}_s |\alpha|}{4 \rho \sigma^2} \left({1+2 \sin(\phi)\sqrt{\frac{1}{\beta_s}}+\frac{1}{\beta_s}}\right) \right) d\phi,
\end{split}
\end{align}
where ${\beta}_{s}=\frac{G^F_s}{G^B_s}=\min_{\substack{ \forall k_\mathrm{BS},k_\mathrm{MS}=1,2,..., K^{s-1} \\ \forall d=1,2,..., N^2 \\ \bar{m}\in \cG_{\bar{m}}}}{\beta_{(s,k_\mathrm{BS},k_\mathrm{MS},d,\bar{m})}}$, and $G^\mathrm{F}_s$ is the corresponding forward beamforming gain.

Finally, to obtain the average probability of error $\bar{p}$, we need to integrate over the exponential distribution of $|\alpha|^2$. However, by swapping the summation with the integration sign, we will get again an integral similar to that given and solved in equations (27)-(35) of \cite{simon1998unified}, and by which we can directly obtain the bound in \eqref{eq:Pe}
\end{proof}

For the case when $\beta_s \to \infty$, i.e., when the backward gain in negligible and $\bE_{(s,k_\mathrm{BS},k_\mathrm{MS})} \to \boldsymbol{0}$, we can proceed further, and obtain a sufficient condition on the training power distribution to guarantee estimating the AoA/AoD of the channel with a certain bound on the maximum probability of error.

\begin{corollary}
 Consider using Algorithm \ref{alg1} to estimate the AoA and AoD of the single-path mmWave channel of model \eqref{eq:channel_model}, with a resolution $\frac{2 \pi}{N}$, with $K$ precoding and measurement vectors of $\cF, \cW$ used at each stage, and with $\beta_s \to \infty$, and $\bE_{s,k_\mathrm{BS},\mathrm{MS}} \to 0$. If the power at each stage $P_{(s)}, s=1,2,..., \mathrm{S}$ satisfies:
\begin{equation}
P_{(s)} \geq \frac{\Gamma}{G_{(s)}}
\end{equation}
with
\begin{equation}
\Gamma=\frac{2}{\bar{\gamma}}\left(\frac{(K^2-1)\mathrm{S}}{\delta}-2\right),
\end{equation}
then, the AoA and AoD are guaranteed to be estimated with an average probability of error $\bar{p} \leq \delta $.
\label{cor:Cor2}
\end{corollary}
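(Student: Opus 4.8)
The plan is to specialize the bound \eqref{eq:Pe} of Theorem~\ref{th:first} to the regime $\beta_s \to \infty$, $\bE_{(s,k_\mathrm{BS},k_\mathrm{MS})} \to \boldsymbol{0}$. In this limit every term $1/\beta_s$ disappears, and from the definition $G^\mathrm{F}_{(s,k_\mathrm{BS},k_\mathrm{MS},m,d)} = \left|\sqrt{G_{(s)}} + \sqrt{N_\mathrm{BS}N_\mathrm{MS}}\left[\bE_{s,k_\mathrm{BS},k_\mathrm{MS}}\right]_{m(d),d}\right|^2$ one gets $G^\mathrm{F}_s \to G_{(s)}$, the total training beamforming gain of stage $s$. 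Since the sum in \eqref{eq:Pe} is finite, passing to the limit termwise is immediate, and each per-stage summand reduces to
\begin{equation}
1 - \frac{P_{(s)} G_{(s)} \bar{\gamma}}{4\sqrt{1 + \tfrac{1}{2} P_{(s)} G_{(s)} \bar{\gamma} + \tfrac{1}{16} P_{(s)}^2 G_{(s)}^2 \bar{\gamma}^2}}.
\end{equation}

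The key observation is that, writing $x_s \bydef P_{(s)} G_{(s)} \bar{\gamma} \geq 0$, the radicand is a perfect square, $1 + \tfrac{1}{2} x_s + \tfrac{1}{16} x_s^2 = \left(1 + \tfrac{1}{4} x_s\right)^2$, so the square root equals the positive number $1 + \tfrac{1}{4} x_s$ and the summand collapses to $1 - \frac{x_s}{4 + x_s} = \frac{4}{4 + x_s}$. Therefore
\begin{equation}
\bar{p} \leq \frac{K^2-1}{2} \sum_{s=1}^{\mathrm{S}} \frac{4}{4 + P_{(s)} G_{(s)} \bar{\gamma}} = 2(K^2-1) \sum_{s=1}^{\mathrm{S}} \frac{1}{4 + P_{(s)} G_{(s)} \bar{\gamma}}.
\end{equation}

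Next I would invoke the hypothesis $P_{(s)} \geq \Gamma / G_{(s)}$, i.e.\ $P_{(s)} G_{(s)} \geq \Gamma$ for every $s$, together with the fact that $t \mapsto 1/(4+t)$ is decreasing on $[0,\infty)$, to bound each summand by $1/(4 + \Gamma \bar{\gamma})$; since there are $\mathrm{S}$ stages this yields $\bar{p} \leq \frac{2(K^2-1)\mathrm{S}}{4 + \Gamma \bar{\gamma}}$. Finally, substituting $\Gamma = \frac{2}{\bar{\gamma}}\left(\frac{(K^2-1)\mathrm{S}}{\delta} - 2\right)$ gives $4 + \Gamma \bar{\gamma} = \frac{2(K^2-1)\mathrm{S}}{\delta}$, so the right-hand side equals exactly $\delta$, which establishes $\bar{p} \leq \delta$ and completes the proof.

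There is no essential difficulty here: the result is a direct consequence of Theorem~\ref{th:first} in a degenerate parameter regime. The only step demanding a little care is recognizing the perfect-square structure of the radicand and checking that one must take its positive root, since this is what makes the dependence on $P_{(s)}$ manifestly monotone; the averaging over $|\alpha|^2$ has already been carried out in \eqref{eq:Pe}, so no further integration is needed. A minor caveat worth stating is that the bound is informative only when $\Gamma \geq 0$, i.e.\ when $\frac{(K^2-1)\mathrm{S}}{\delta} \geq 2$, which is automatically satisfied for any practically relevant (small) target error probability $\delta$.
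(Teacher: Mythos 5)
Your proposal is correct and follows the same route as the paper, which simply states that substituting the given $P_{(s)}$ and $\Gamma$ into the bound \eqref{eq:Pe} yields $\bar{p} \leq \delta$. You have merely made explicit the details the paper leaves implicit — the perfect-square identity $1+\tfrac{1}{2}x_s+\tfrac{1}{16}x_s^2=(1+\tfrac{1}{4}x_s)^2$, the monotonicity in $P_{(s)}G_{(s)}$, and the final arithmetic — all of which check out.
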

To prove  Corollary \ref{cor:Cor2}, it is sufficient to substitute with the given $P_{(s)}$, and $\Gamma$ in \eqref{eq:Pe} to get $\bar{p} \leq \delta$.

Also, from Corollary \ref{cor:Cor2}, it is easy to show that a total training power $P_\mathrm{T}$, with $P_\mathrm{T} \geq K^2 \Gamma \sum_{s=1}^{\mathrm{S}}{\frac{1}{G_{(s)}}}$ is sufficient to estimate the AoA/AoD of the single-path mmWave channel with $\bar{p} \leq \delta $ if it is distributed according to the way described in Corollary \ref{cor:Cor2}.

Finally, if we have a bound on the total training power, we can use Theorem \ref{th:first} to get an upper bound on the error probability.

\begin{corollary}
 Consider using Algorithm \ref{alg1} to estimate the AoA and AoD of the single-path mmWave channel of model \eqref{eq:channel_model}, with a resolution $\frac{2 \pi}{N}$, with $K$ precoding and measurement vectors of $\cF, \cW$ used at each stage, and with $\beta_s \to \infty$, and $\bE_{s,k_\mathrm{BS},\mathrm{MS}} \to 0$. If the total training power is $P_\mathrm{T}$, and if this power is distributed over the adaptive stages of Algorithm \ref{alg1} such that:
\begin{equation}
P_{(s)} = \frac{P_\mathrm{T}}{K^2 \sum_{n=1}^{\mathrm{S}}{\frac{G_{(s)}}{G_{(n)}}}}, s=1,2,..., \mathrm{S}
\end{equation}

Then, the AoA and AoD are guaranteed to be estimated with an average probability of error $\bar{p}$ where
\begin{equation}
\bar{p}\leq \frac{(K^2-1)S}{\frac{P_\mathrm{T} \bar{\gamma}}{2 K^2 \sum_{s=1}^{\mathrm{S}}{\frac{1}{G_{(s)}}}}+2}.
\end{equation}
\label{cor:Cor3}
\end{corollary}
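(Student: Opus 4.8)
The plan is to derive Corollary~\ref{cor:Cor3} directly from the error bound of Theorem~\ref{th:first}, equation~\eqref{eq:Pe}, by first specializing it under the two stated limiting assumptions and then substituting the prescribed power allocation. As $\beta_s \to \infty$ we have $1/\beta_s \to 0$, so in~\eqref{eq:Pe} the factor $\left(1-\frac{1}{\beta_s}\right)$ in the numerator tends to $1$, the cross term in the radicand loses its $\left(1+\frac{1}{\beta_s}\right)$ weighting, and the quadratic term loses its $\left(1-\frac{1}{\beta_s}\right)^2$ weighting; and since $\bE_{s,k_\mathrm{BS},k_\mathrm{MS}} \to \boldsymbol{0}$, the forward gain $G^\mathrm{F}_s$ reduces to the nominal stage gain $G_{(s)}$ by the definitions in Section~\ref{sec:codebook}. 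Writing $a_s := P_{(s)} G_{(s)} \bar\gamma \ge 0$, the $s$th summand of~\eqref{eq:Pe} then becomes $1 - \frac{a_s/4}{\sqrt{1 + a_s/2 + a_s^2/16}}$.

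The second step is the key algebraic observation that the radicand is a perfect square, $1 + \frac{1}{2}a_s + \frac{1}{16}a_s^2 = \bigl(1+\frac{1}{4}a_s\bigr)^2$, whose nonnegative square root is $1 + \frac{1}{4}a_s$. Hence each summand collapses to $1 - \frac{a_s/4}{1 + a_s/4} = \frac{1}{1 + a_s/4} = \frac{4}{4 + a_s}$, yielding the clean intermediate bound $\bar p \le \frac{K^2-1}{2}\sum_{s=1}^{\mathrm S} \frac{4}{4 + P_{(s)} G_{(s)}\bar\gamma}$. As a consistency check, substituting $P_{(s)} = \Gamma/G_{(s)}$ into this intermediate bound reproduces Corollary~\ref{cor:Cor2}, so this is exactly the simplification already implicit there.

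The third step is to plug in the prescribed allocation $P_{(s)} = \frac{P_\mathrm{T}}{K^2 \sum_{n=1}^{\mathrm S} G_{(s)}/G_{(n)}}$. The simplification that makes this corollary work is that $P_{(s)} G_{(s)} = \frac{P_\mathrm{T}}{K^2 \sum_{n=1}^{\mathrm S} 1/G_{(n)}}$ is independent of $s$, so $a_s$ equals a common value $a = \frac{P_\mathrm{T}\bar\gamma}{K^2 \sum_{n=1}^{\mathrm S} 1/G_{(n)}}$ for all $s$. The $\mathrm S$ summands are then identical, so the sum is $\mathrm S \cdot \frac{4}{4+a}$, and rearranging gives $\frac{K^2-1}{2}\cdot\frac{4\mathrm S}{4+a} = \frac{2(K^2-1)\mathrm S}{4+a} = \frac{(K^2-1)\mathrm S}{2 + a/2}$; since $a/2 = \frac{P_\mathrm{T}\bar\gamma}{2K^2 \sum_{s=1}^{\mathrm S} 1/G_{(s)}}$, this is precisely the claimed bound.

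I do not expect a substantive obstacle: the whole argument is the algebraic reduction of~\eqref{eq:Pe} (already needed for Corollary~\ref{cor:Cor2}) together with the observation that the chosen allocation equalizes the per-stage effective SNR $P_{(s)}G_{(s)}\bar\gamma$. The only points needing care are verifying the perfect-square factorization so that the square root is removed without introducing slack, and noting that the limit $\beta_s\to\infty$, $\bE_{s,k_\mathrm{BS},k_\mathrm{MS}}\to\boldsymbol{0}$ can be taken termwise since the sum over $\mathrm S$ stages is finite. An alternative route would be to invert the sufficient condition of Corollary~\ref{cor:Cor2} — solving for the smallest $\delta$ for which the given $P_\mathrm{T}$ suffices — but the direct substitution is cleaner and self-contained.
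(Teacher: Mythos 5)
Your proposal is correct and is exactly the paper's (one-line) argument, carried out in full: the paper simply says to substitute the given $P_{(s)}$ into the bound of Theorem~\ref{th:first}, and you supply the details — the perfect-square radicand $\bigl(1+\tfrac{1}{4}a_s\bigr)^2$, the resulting summand $\tfrac{4}{4+a_s}$, and the observation that the prescribed allocation makes $P_{(s)}G_{(s)}$ independent of $s$. The consistency check against Corollary~\ref{cor:Cor2} also verifies correctly.
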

To prove  Corollary \ref{cor:Cor3}, it is sufficient to substitute with the given $P_{(s)}$ in \eqref{eq:Pe} to get the bound on $\bar{p}$.

In this section, the main idea of the proposed adaptive mmWave channel estimation algorithm was explained and analyzed for the single-path channels. Now, we extend this algorithm to the general case of multi-path mmWave channels.

\subsection{Adaptive Channel Estimation Algorithm for Multi-Path MmWave Channels}\label{subsec:multi-path}

Consider the case when multiple paths exist between the BS and MS. Thanks to the poor scattering nature of the mmWave channels, the channel estimation problem can be formulated as a sparse compressed sensing problem as discussed in \sref{sec:Prob_Form}. Consequently, a modified matching pursuit algorithm can be used to estimate the AoAs and AoDs along with the corresponding path gains of $L_\mathrm{d}$ paths of the channel, where $L_\mathrm{d}$ is the number of dominant paths need to be resolved. Given the problem formulation in \eqref{eq:sparse_formulation2}, the objective now is to determine the $L_\mathrm{d}$ non-zero elements of $\bz$ with the maximum power. Based on the single-path case, we propose Algorithm~\ref{alg2} to adaptively estimate the different channel parameters.
\begin{algorithm} [!t]                     
\caption{Adaptive Estimation Algorithm for Multi-Path MmWave Channels}          
\label{alg2}                                                      
\begin{algorithmic}
    \State \textbf{Input:} BS and MS know $N, K, L_d$, and have $\cF, \cW$
    \State \textbf{Initialization:} $\bT^\mathrm{BS}_{(1,1)}=\{1, ..., 1\}, \bT^\mathrm{MS}_{(1,1)}=\{1, ..., 1\}, \mathrm{S}=\log_{K}{\left(N/L_d\right)}$                                              
    \For{$\ell \leq L_d$}
        \For{$s \leq \mathrm{S}$}
            \For {$m_\mathrm{BS} \leq K L_d$}
                \State BS transmits a training symbol using $\left[\bF_{(s,\bT^\mathrm{BS}_{(\ell,s)})}\right]_{:,m_\mathrm{BS}}$
                \For {$m_\mathrm{MS} \leq K L_d$}
                      \State MS makes a measurement using $\left[\bW_{(s,\bT^\mathrm{BS}_{(\ell,s)})}\right]_{:,m_\mathrm{MS}}$
                \EndFor
            \State After MS measurements: $\by_{m_\mathrm{BS}}= \sqrt{P_{s}}\left[\bW_{(s,\bT^\mathrm{MS}_{(\ell,s)})}\right] \bH \left[\bF_{(s,\bT^\mathrm{BS}_{(\ell,s)})}\right]_{:,m_\mathrm{BS}}+\bn_{m_\mathrm{BS}}$
            \EndFor
            \State $\by_{(s)}=[\by_1^T, \by_2^T, ..., \by_K^T]^T$
            \For {$p=1 \leq \ell-1$} Project out the contributions of the previously estimated paths
                 \State $\bg=\bF^T_{(s,\bT^\mathrm{BS}_{(p,s)})} \left[\bA_\mathrm{BS,D}\right]^*_{:,{\bT^\mathrm{BS}_{(p,s)}(1)}} \otimes \bW^H_{(s,\bT^\mathrm{MS}_{(p,s)})} \left[\bA_\mathrm{MS,D}\right]_{:,\bT^\mathrm{MS}_{(p,s)}(1)}$
                \State $\by_{(s)}=\by_{(s)}-\by_{(s)}^H \bg \left(\bg^H \bg\right) \bg$
            \EndFor
            \State $\bY=\mathrm{matix}(\by_{(s)})$ Return $\by_{(s)}$ to the matrix form
            \State $\left(m_\mathrm{BS}^{\star},m_\mathrm{MS}^{\star}\right)=\arg\max_{ \forall m_\mathrm{BS}, m_\mathrm{MS}=1,2,..., K} \left[\bY \odot \bY^* \right]_{m_\mathrm{MS},m_\mathrm{BS}}$
            \State $\bT^\mathrm{BS}_{(\ell,s+1)}(1)=K(m_\mathrm{BS}^\star-1)+1, \bT^\mathrm{MS}_{(\ell,s+1)}(1)=K(m_\mathrm{MS}^\star-1)+1$
            \For {$p=1 \leq \ell-1$}
                \State $\bT^\mathrm{BS}_{(\ell,s+1)}(p)=\bT^\mathrm{BS}_{(p,s+1)}(1), \bT^\mathrm{MS}_{(\ell,s+1)}(p)=\bT^\mathrm{MS}_{(p,s+1)}(1)$
            \EndFor
        \EndFor
        \State $\hat{\phi}_\ell=\bar{\phi}_{\bT^\mathrm{BS}_{(\ell,\mathrm{S}+1)}(1)}, \hat{\theta}_\ell=\bar{\theta}_{\bT^\mathrm{MS}_{(\ell,\mathrm{S}+1)}(1)}$
        \State $\bg=\bF^T_{(\mathrm{S},\bT^\mathrm{BS}_{(\ell,\mathrm{S})})} \left[\bA^*_\mathrm{BS,D}\right]_{:,\bT^\mathrm{BS}_{(\ell,\mathrm{S}+1)}(1)} \otimes \bW^H_{(\mathrm{S},\bT^\mathrm{MS}_{(\ell,\mathrm{S})})} \left[\bA_\mathrm{MS,D}\right]_{:,\bT^\mathrm{MS}_{(\ell,\mathrm{S}+1)}(1)}$
        \State $\hat{\alpha}_\ell=\sqrt{\frac{\rho}{P_{(\mathrm{S})}G_\mathrm{(S)}}}{\frac{\by_{(\mathrm{S})}^H \bg}{\bg^H \bg}  }$
    \EndFor
    \end{algorithmic}

\end{algorithm}

\textbf{Modified Hierarchical Codebook}: For the multi-path case, we need to make a small modification to the structure of the hierarchical codebook described in \sref{sec:codebook}.
 As will be explained shortly, the adaptive algorithm in the multi-path case starts by using $K L_d$ precoding and measurement vectors at the BS and MS instead of $K$. In each stage, $L_d$ of
 those $K L_d$ partitions are selected for further refinement by dividing each one into $K$ smaller partitions in the next stage. Hence, to take this into account, the first level of the codebook $\cF$ in \sref{sec:codebook} consists of one subset with $K L_d$ beamforming vectors that divide the initial AoD range into $K L_d$ ranges. Similarly, in each level $s, s>1$, the codebook $\cF_s$ has $K^{s-1} L_d$ levels, and the ranges $\cI_{(s,k)}$, and $\cI_{(s,k,m)}$ are consequently defined as $\cI_{(s,k)}=\left\{\frac{(k-1)N}{L_d K^{s-1}},...,\frac{kN}{L_d K^{s-1}}\right\}$, and $\cI_{(k,s,m)}=\left\{\frac{N}{L_d K^{s}}\left(K(k-1)+m-1\right)+1,..., \frac{N}{L_d K^{s}}\left(K(k-1)+m\right)\right\}$. Given these definitions of the quantized AoD ranges associated with each beamforming vector $m$, of the subset $k$, of level $s$, the design of the beamforming vectors proceeds identical to that described in \sref{subsec:codebook_design}.

To estimate the $L_d$ dominant paths of the mmWave channel, Algorithm \ref{alg2} makes $L_d$ outer iterations. In each one, an algorithm similar to Algorithm \ref{alg1} is executed to detect one more path after subtracting the contributions of the  previously estimated paths. More specifically, Algorithm \ref{alg2} operates as follows: In the initial stage, both the BS and MS use $K L_d $ beamforming vectors defined by the codebooks in \sref{sec:codebook} to divide the AoA, and AoD range into $KL_d$ sub-ranges each. Similar to the single-path case, the algorithm proceeds by selecting the maximum received signal power to determine the $L_d$ most promising sections to carry the dominant paths of the channel. This process is repeated until we reach the required AoD resolution, and only one path is estimated at this iteration. The trajectories used by the BS to detect the first path is stored in the matrix $\bT^\mathrm{BS}$  to be used in the later iterations. In the next iteration, a similar BS-MS precoding/measurement step is repeated. However, at each stage $s$, the contribution of the first path that has been already estimated in the previous iteration, which is stored in $\bT^\mathrm{BS}$, is projected out before determining the new promising AoD ranges. In the next stage $s+1$, two AoD ranges are selected for further refinement, namely, the one selected at stage $s$ of this iteration, and the one selected by the first path at stage $s+1$ of the first iteration which is stored in $\bT^\mathrm{BS}$. The selection of those two AoD ranges enables the algorithm to detect different path with AoDs separated by a resolution up to $\frac{2 \pi}{N}$. The algorithm proceeds in the same way until the $L_d$ paths are solved. After estimating the AoAs/AoDs with the desired resolution, the algorithm finally calculates the estimated path gains using a linear least squares estimator (LLSE).

Note that one disadvantage of the adaptive beamwidth algorithm in the multi-path case is the possible destructive interference between the path gains when they are summed up in the earlier stages of the algorithm. This disadvantage does not appear in the exhaustive search training algorithms; as only high resolution beams are used in estimating the dominant paths of the channel. The impact of this advantage on the operation of the proposed algorithm, however, is smaller in the case of mmWave channels thanks to the sparse nature of the channel.

The total number of adaptive stages required by Algorithm \ref{alg2} to estimate the AoAs/AoDs of the $L_\mathrm{d}$ paths of the channel with a resolution $\frac{2 \pi}{N}$ is $\log_K\left(\frac{N}{L_\mathrm{d}}\right)$. Since we need $K L_\mathrm{d}$ precoding vectors, and $K L_\mathrm{d}$ measurement vectors for each precoding direction in each stage, and since these adaptive stages are repeated for each path, the total number of steps required to estimate $L_d$ paths of the mmWave channel using the proposed algorithm is $K^2 L_\mathrm{d}^3 \log_K\left(\frac{N}{L_\mathrm{d}}\right)$. If multiple RF chains are used in the MS to combine the measurements, the required number of training time slots is then reduced to be $K L_d^2 \lceil\frac{K L_d}{\Nc}\rceil\log_K\left(\frac{N}{L_\mathrm{d}}\right)$.

\section{Hybrid Precoding Design} \label{sec:Design}


We seek now to design the hybrid precoders/combiners, ($\Frf$, $\Fbb$, $\Wrf$, $\Wbb$), at both the BS and MS to maximize the mutual information achieved with Gaussian signaling over the mmWave link in (\ref{eq:received_signal2})~\cite{goldsmith2003capacity} while taking the different RF precoding constraints into consideration. Regardless of whether uplink or downlink transmission is considered, the hybrid precoding problem can be summarized as directly maximizing the rate expression
\begin{align}
R =\log_2 \left|I_{N_\mathrm{S}}+\frac{P}{N_\mathrm{S}}\bR_\mathrm{n}^{-1}  {\Wbb}^H {\Wrf}^H \bH\Frf\Fbb
{\Fbb}^H {\Frf}^H { \bH}^H \Wrf \Wbb \right|,
 \label{eq:rate1}
\end{align}
over the choice of feasible analog and digital processing matrices  ($\Frf$, $\Fbb$, $\Wrf$, $\Wbb$). Note that in (\ref{eq:rate1}), $\bR_\mathrm{n}$ is the post-processing noise covariance matrix, i.e.,  $\bR_\mathrm{n}={\Wbb}^H {\Wrf}^H {\Wrf} {\Wbb}$ in the downlink, and $\bR_\mathrm{n}= {\Fbb}^H {\Frf}^H {\Frf} {\Fbb}$ in the uplink.

For simplicity of exposition, we begin in this section by summarizing the process with which the BS calculates the hybrid precoding matrices, ($\Frf$, $\Fbb$), to be used on the downlink. Calculation of the uplink precoders used by the MS follows in an identical manner.

We propose to split the precoding problem into two phases. In the first phase, the BS and MS apply the adaptive channel estimation algorithm of \sref{sec:Prob_Form} to estimate the mmWave channel parameters. At the end of the channel training/estimation phase, the BS constructs the downlink channel's matrix leveraging the geometric structure of the channel. If the channel is not reciprocal, the estimation algorithm of \sref{sec:Algorithm} can be used to construct the uplink channel matrix at the MS. At this stage, the MS leverages the basis pursuit algorithm in \cite{ayach2013spatially} to compute $\Wrf$ and $\Wbb$ so that their combined effect, $\Wrf\Wbb$, approximates the dominant eigenvectors of the uplink's channel.

As a result of the downlink channel training/estimation phase in \sref{sec:Algorithm}, the BS now has estimated knowledge of its own steering matrix $\hat{\bA}_\mathrm{BS}$, the MS steering matrix $\hat{\bA}_\mathrm{MS}$, and the estimated path gain vector $\hat{\boldsymbol\alpha}$. Thus, the BS may construct the estimated downlink channel matrix as
\begin{equation}
\hat{\bH}=\hat{\bA}_\mathrm{MS} \mathrm{diag}\left(\hat{\boldsymbol\alpha}\right) \hat{\bA}_\mathrm{BS}^H. \label{eq:channelf}
\end{equation}

The BS can now build its hybrid data precoders $\Frf$ and $\Fbb$ to approximate the dominant singular vectors of the channel, $\hat{\bH}$, denoted by the unconstrained precoder $\bF_\mathrm{opt}$.

At this stage, we recall that the precoding capability of the system in Fig. \ref{fig:BS_MU_arch_Fig} can be summarized as the ability to apply a set of $\Nc$ constrained analog beamforming vectors, via the analog precoder $\Frf$, and form a linear combination of them via its digital precoder $\Fbb$. Following the methodology in \cite{ayach2013spatially},
the problem of approximating the unconstrained precoder $\bF_\mathrm{opt}$ can be written as~\cite{ayach2013spatially}
\begin{align}
\begin{split}
(\bF_\mathrm{RF}^{\star}, \bF_\mathrm{BB}^{\star}) &  = \\  & \hspace{-17pt} \arg\min \ \  \|\bF_\mathrm{opt}-\Frf\Fbb\|_F, \\
& \hspace{-15pt} \mathrm{s.t}. \ \  \left[\Frf\right]_{:,i} \in \left\{\left[\bA_\mathrm{can}\right]_{:,\ell} | \ 1 \leq \ell \leq N_\mathrm{can}\right\}, i=1,2,..., N_\mathrm{RF} \\
& \hspace{9pt} \|\Frf\Fbb\|_F^2=N_\mathrm{S}.
\label{eqn:combinationprecoding}
\end{split}
\end{align}

This problem is similar to \eqref{eq:BF_Design}, and can be again formulated as a sparse optimization problem as in \eqref{eq:BF_Design2}. Finally, the BS baseband and RF precoders,
$\Fbb$ and $\Frf$, can be designed using the iterative matching pursuit procedure in Algorithm \ref{alg3}, but instead of $\bff_\mathrm{res}$, we define the matrix $\bF_\mathrm{res}=\bF_\mathrm{opt}$,
and instead of $\bff^\star$, we define $\bF^\star=\bF_\mathrm{opt}$.

\section{Simulation Results} \label{sec:Results}
In this section, we present numerical results to evaluate the performance of the proposed training codebook, adaptive channel estimation algorithm, and hybrid precoding algorithm. We firstly consider a single BS-MS link, and then show some results for the mmWave cellular channel model.

\subsection{Performance Evaluation with Point-to-Point Channels} \label{subsec:link}
In these simulations, we consider the case when there is only one BS and one MS, i.e., without any interference. The system model and the simulation scenario are as follows:

\textbf{System Model}  We adopt the hybrid analog/digital system architecture presented in \figref{fig:BS_MU_arch_Fig}. The BS has $N_\mathrm{BS}=64$ antennas, and $10$ RF chains, the MS has $N_\mathrm{MS}=32$ antennas and $6$ RF chains. The antenna arrays are ULAs, with spacing between antennas equal to $\lambda/2$, and the RF phase shifters are assumed to have only quantized phases. Hence, only a finite set of the RF beamforming vectors is allowed, and assumed to be beamsteering vectors, as discussed in \sref{subsec:codebook_design}, with $7$ quantization bits.

\textbf{Channel Model} We consider the channel model described in \eqref{eq:channel_model}, with $\bar{P}_R=1$, and a number of paths $L=3$. The AoAs/AoDs are assumed to take continuous values, i.e., not quantized, and are uniformly distributed in the range $\left[0, 2 \pi\right]$. The system is assumed to operate at $28 \mathrm{GHz}$ carrier frequency, has a bandwidth of $100 \mathrm{MHz}$, and the path-loss exponent equals $n_\mathrm{pl}=3$.

\textbf{Simulation Scenario} All the simulations in this section will present spectral efficiency results with different system, and algorithms parameters. To generate these results, the channel parameters are estimated using the algorithms presented in \sref{sec:Algorithm}, which in turn use the hierarchical training codebooks designed in \sref{sec:codebook}. After estimating its parameters, the geometrical channel is reconstructed according to \eqref{eq:channelf}, and is used in the design of the hybrid precoders and decoders according to \sref{sec:Design}. Unless otherwise mentioned, these are the parameters used for both of the two steps:
\begin{enumerate}
\item{Channel estimation parameters: For the single-path channels, Algorithm \ref{alg1} is used to estimate the channel parameters with AoA/AoD resolution parameter $N=64$, and with $K=2$ beamforming vectors at each stage. For the multi-path case, the parameters $N, K, L_d$ will be defined with each simulation. The training power are determined according to Corollary $\ref{cor:Cor2}$, with a desired maximum probability of error $\delta = 0.05$. Hence, the training power changes based on the parameter $K$, and $N$. Also, the total training power is distributed over the adaptive estimation stages according to Corollary \ref{cor:Cor2}.}
\item{Hybrid precoding parameters: The hybrid precoding matrices are constructed with the same available system architecture described above, and assuming a number of multiplexed streams $N_\mathrm{S}=L_d$.
}
\end{enumerate}

\begin{figure}[t]
\centerline{
\includegraphics[width=5 in, height= .5\textwidth]{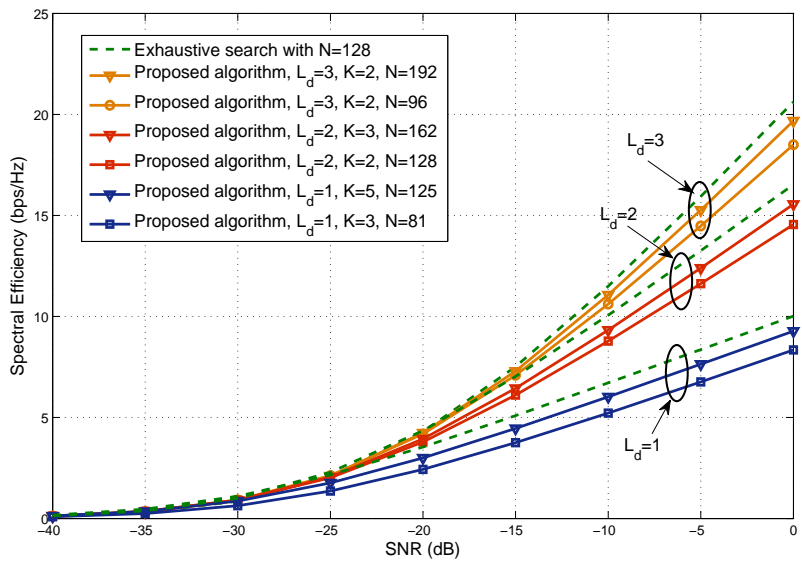}
}
\caption{Spectral efficiency achieved when the precoding matrices are built using the mmWave channel estimated by the proposed algorithms in a channel with $L=3$, and $L_\mathrm{d}=1,2,3$. The figure compares the performance of the algorithm when different values of the parameter $K$ are chosen. The results indicate that a very close performance to the exhaustive search case can be achieved with $K<<N$, which maps to much smaller numbers of iterations.}
\label{fig:Estimation}
\vspace{-10pt}
\end{figure}

In \figref{fig:Estimation}, the precoding gains given by the proposed mmWave channel estimation algorithms are simulated for the cases when the desired number of estimated paths $L_\mathrm{d}$ equals $1,2$, and $3$. Algorithm \ref{alg1}, and Algorithm \ref{alg2} are simulated for different values of $K$, and compared with the precoding gain of the exhaustive search solution. The results indicate that comparable gains can be achieved using the proposed algorithms despite their low-complexity, and the requirement of a much smaller number of iterations. For example, for $L_\mathrm{d}=3$, and $K=2$, although only $96 \ll N_\mathrm{BS} N_\mathrm{MS} = 2048$ training steps are required, the spectral efficiency performance degradation is less than $1$ bps/Hz compared with the exhaustive search solution that requires much more iterations.

\begin{figure}[t]
\centerline{
\includegraphics[width=5 in, height=.5\textwidth]{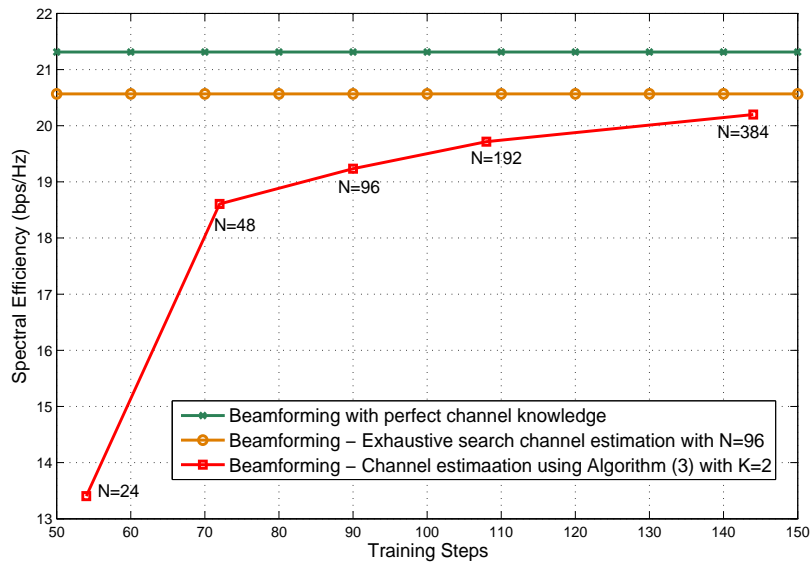}
}
\caption{The improvement of the spectral efficiency with the development of the adaptive channel estimation algorithm is shown and compared with the exhaustive search and perfect channel knowledge cases. While the exhaustive search in this case needs a large number of  iterations, a much smaller number of iterations may be sufficient to approximate its performance using the proposed adaptive algorithms.}
\label{fig:Training}
\vspace{-10pt}
\end{figure}

In \figref{fig:Training}, the improvement of the precoding gains achieved by the proposed algorithm for $L_\mathrm{d}=3$ with the training iterations is simulated. The results show that more than $90\%$ of the exhaustive search gain can be achieved with only $70$ iterations with $K=2$. These results also indicate that a wise choice of the desired resolution parameter $N$ is needed in order to have a good compromise between performance and training overhead. For example, the figure shows that doubling the number of training steps, i.e., from 70 to 140, achieves an improvement of only $1$ bps/Hz in the spectral efficiency.

\begin{figure}[t]
\centerline{
\includegraphics[width=5 in, height= .5\textwidth]{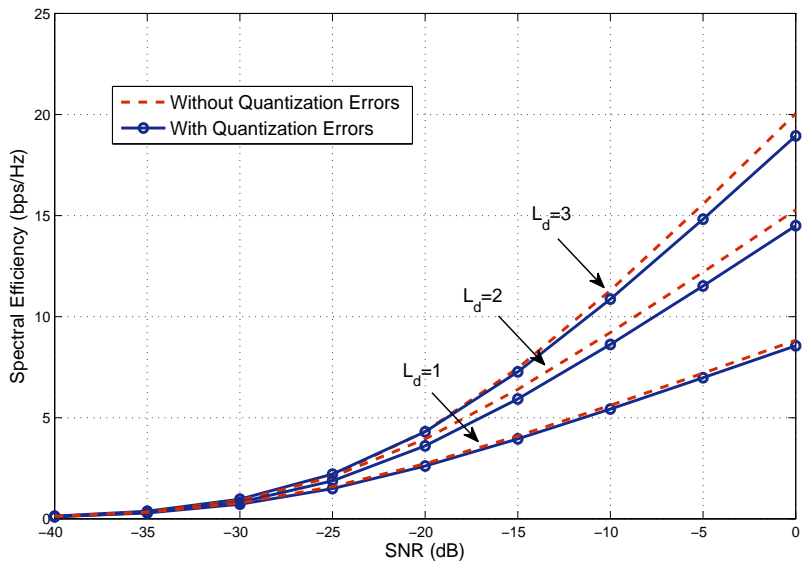}
}
\caption{The performance error due to the AoAs/AoDs quantization assumption in \eqref{eq:sparse_formulation} is evaluated. The performance error is the difference between the curve with continuous angles, and the one with quantization, as this continuity of angles' values is not taken into consideration while designing the algorithm.}
\label{fig:Quantization}
\vspace{-10pt}
\end{figure}

In \figref{fig:Quantization}, we evaluate the error in the performance of the proposed channel estimation algorithm caused by the AoAs/AoDs quantization assumption made in \eqref{eq:sparse_formulation}, the proposed algorithms are simulated for the cases when the channel AoAs/AoDs are quantized, i.e., when the used quantization assumption is exact, and when the AoAs/AoDs are continuous, i.e, with quantization error induced in our formulation. The figure plots the performance of the proposed algorithms for the cases $L_d=1, K=2, N=81$,$L_d=2$, $K=2, N=128$, and $L_d=3, K=3, N=96$, and show that the performance loss in our algorithms due to the quantization assumption is very small for large enough resolution parameters $N$.

\begin{figure}[t]
\centerline{
\includegraphics[width=5 in, height= .5\textwidth]{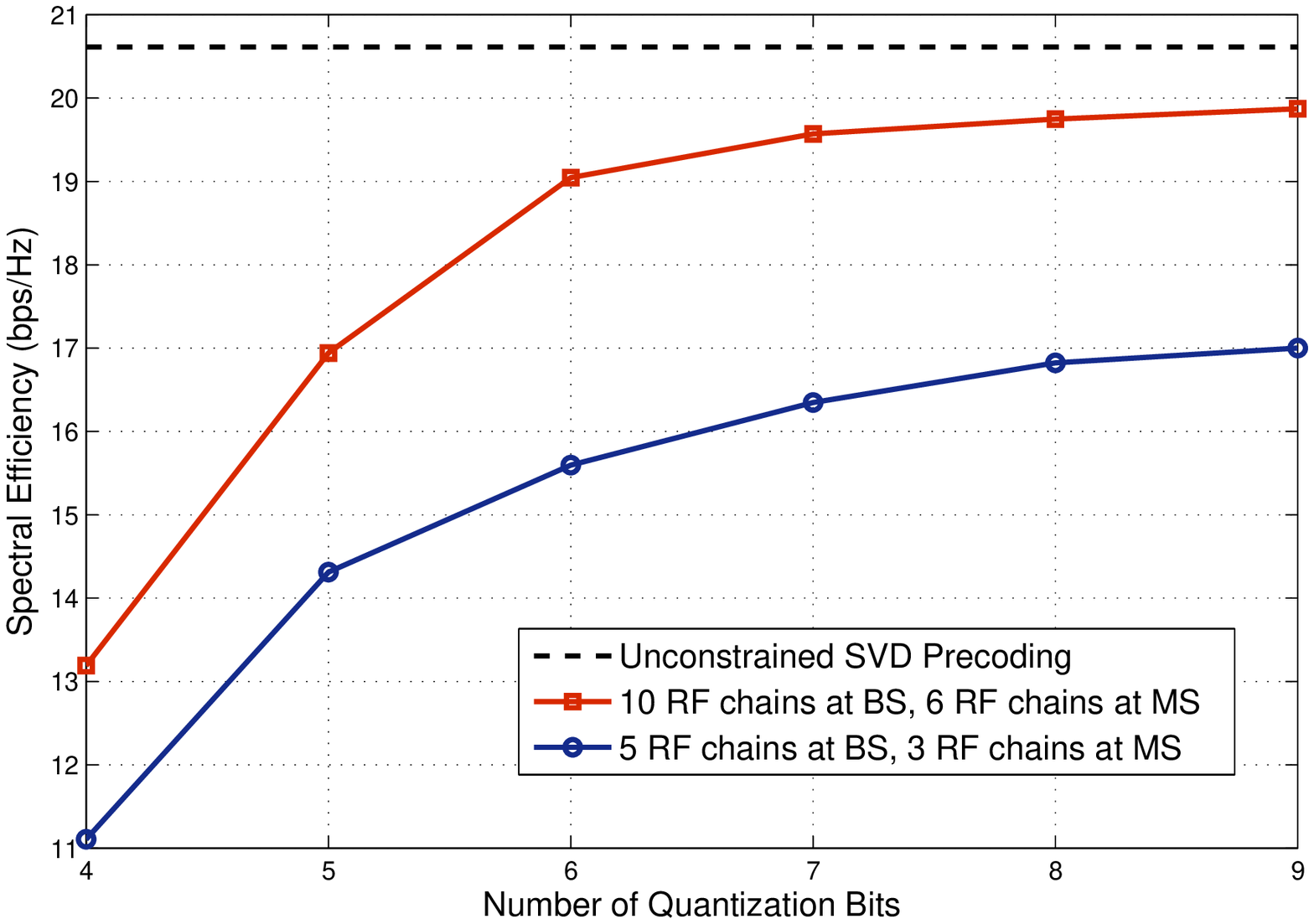}
}
\caption{Spectral efficiency as a function of phase quantization bits in a hybrid system with only quantized analog phase control. Results compare the performance of the hybrid analog digital channel estimation and precoding algorithms with the unconstrained digital system with perfect channel knowledge at an SNR of 0dB.}
\label{fig:RF}
\vspace{-10pt}
\end{figure}

In \figref{fig:RF}, the impact of the RF system limitations on the performance of the proposed channel estimation, and precoding algorithms, is evaluated, and compared with the case of constraints-free system. Two system models are considered in \figref{fig:RF}, one with 10 RF chains at the BS, 6 RF chains at the MS, and the other with  5 RF chains at the BS, and 3 RF chains at the MS. The other parameters are the same as the previous simulations with $L_d=3$. The performance achieved by those two systems is further simulated with different number of quantization bits of the phase shifters. Simulation results show that the proposed hybrid analog-digital precoding algorithm can achieve near-optimal data-rates compared with the unconstrained solutions if a sufficient number of RF chains, and quantization bits exist. Also, the results show that 5 quantization bits may be sufficient to accomplish more that $90\%$ of the maximum gain.

\subsection{Performance Evaluation with MmWave Cellular System Setup} \label{subsec:cellular_setup}

Now, we consider evaluating the proposed algorithm in a mmWave cellular system setting with out-of-cell interference. To provide a practical evaluation, we adopt the following stochastic geometry model.

\textbf{Network and System Models} The desired BS, in a cell of radius $R_c=100 m$, is assumed to communicate with a MS using the channel estimation, and hybrid precoding algorithms derived. Each MS is assumed to receive its desired signal $s_\mathrm{d}$ in addition to cellular interference. The interfering BSs follow a Poisson point process (PPP) $\Phi(\lambda)$ with $\lambda=\frac{1}{\pi R_c^2}$ to model the downlink out-of-cell interference \cite{andrews2011tractable,akoum2012coverage,Bai1}. To simulate a cellular setting, the nearest BS to the MS is always considered as the desired BS. The received signal at the MS can be then written as
\begin{align}
\by=\bW \bH_\mathrm{d} \bF_\mathrm{d} \bs_\mathrm{d} + \sum_{\substack{r_i \in \Phi(\lambda) \\ r_i \geq r_\mathrm{d} }}{\bW \bH_i \bF_i \bs_i}+\bn \label{eq:Int}
\end{align}
where $r_\mathrm{d}, r_i$ are the distances from the MS to the desired and the $i$th interfering BSs, respectively. Each interfering BS is assumed to have the same number of ULA antennas $N_\mathrm{BS}=64$, and to have the same horizontal orientation of the antenna arrays, i.e., all the beamforming is in the azimuth domain. Further, each BS generates a beamsteering beamforming vectors that steers its signal in a uniform random direction, i.e., $\bF_i=\ba_\mathrm{BS}\left(\phi_i\right)$, $\phi_i$ is uniformly chosen in $\left[0, 2 \pi\right]$. $\bH_i$ has the same definition in \eqref{eq:channel_model} with the path loss calculated for each BS based on its distance $r_i$. For fairness, all BSs are assumed to transmit with the same average power $P$. All the other system parameters are similar to the previous section.

In each stage $s$ of the estimation phase, the received signal at the MS is given by \eqref{eq:Int} with $\bW$ and $\bF$ equal to the BS and MS training precoders and combiners descried in \sref{sec:Algorithm}. Hence, the cellular interference affects the maximum power detection problem at every stage of the channel estimation algorithm. After the channel is estimated, the precoders $\bW$ and $\bF$ are designed as shown in \sref{sec:Design}.

To evaluate the performance of the proposed hybrid precoding algorithm, we adopt the coverage probability as a performance metric. As we are interested in multiplexing many streams per user, we define the coverage probability relative to the rate instead of the signal to interference and noise ratio (SINR). Consequently, we use the following definition of the coverage probability

\begin{equation}
P_{(c)}\left(\eta\right)=\mathrm{P}(R \geq \eta).
\end{equation}\label{eq:Coverage}

An outage happens if the user's rate falls below a certain threshold $\eta$.

\begin{figure}[t]
\centerline{
\includegraphics[width=5 in, height= .5\textwidth]{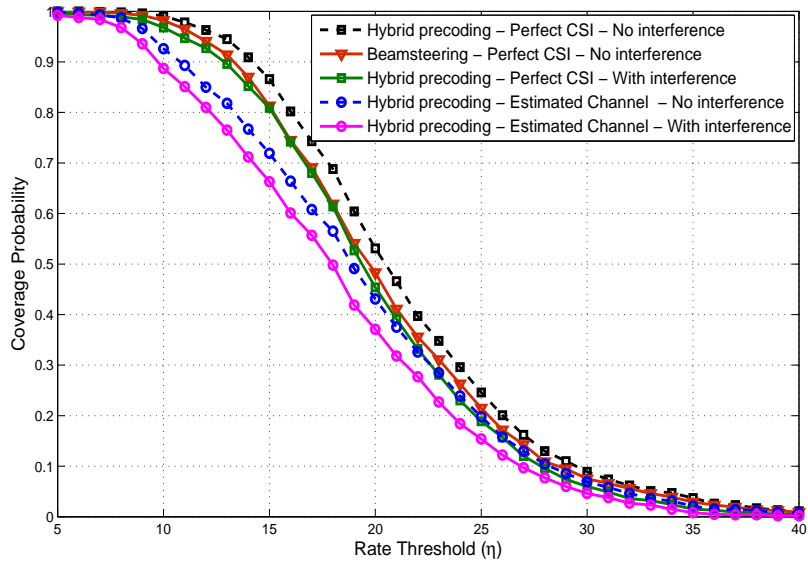}
}
\caption{Coverage probabilities of the proposed channel estimation and precoding algorithms in a mmWave cellular system setting with PPP interference. The figure compares the different cases when the estimation and/or interference error exist to evaluate the effect of each of them on the proposed algorithms.}
\label{fig:Coverage_HP}
\vspace{-10pt}
\end{figure}

\textbf{Scenario and Results} In \figref{fig:Coverage_HP}, the coverage probability is evaluated as described before. The curves with 'Estimated Channel' label represents the case when Algorithm (3) is used to estimate the channel parameters in the presence of interference. After estimating the channel, this interference is taken into consideration again in calculating the coverage probability in the curve labeled 'With Interference', and omitted for the curve with the label 'No Interference'. Hence, those two curves represent the cases when cellular interference affects both the channel estimation and data transmission phases, or the channel estimation phase only. The presented results compare the performance of the mentioned scenario using the proposed algorithms, with the case when the hybrid precoding algorithm in \sref{sec:Design} is designed based on perfect channel state information (CSI). They are also compared with the case when only analog beamforming is used to steer the signal towards the dominant channel paths. The results show that a reasonable gain can be achieved with the proposed hybrid precoding algorithm due to its higher capability of managing the inter-stream interference, in addition to overcoming the RF hardware constraints. The simulations also indicate that the effect of the cellular interference of the performance of the channel estimation and precoding algorithms is not critical despite of the low-complexity of the proposed algorithms.

\section{Conclusions}\label{sec:conclusion}
In this paper, we considered a single-user mmWave system setting, and investigated the design of suitable mmWave channel estimation and precoding algorithms. First, we formulated, and developed a hierarchical multi-resolution codebook based on hybrid analog/digital precoding. We then proposed mmWave channel estimation algorithms that efficiently detect the different parameters of the mmWave channel with a low training overhead. The proposed algorithms depend on the developed sparse formulation of the poor scattering mmWave channel, and on the designed hierarchical codebooks to adaptively estimate the channel parameters. The performance of the proposed algorithm is analytically evaluated for the single-path channel case, and some insights into efficient training power distributions are obtained. Despite the low-complexity, simulation results showed that the proposed channel estimation algorithm realizes spectral efficiency and precoding gain that are comparable to that obtained by exhaustive search. The mmWave hybrid precoding algorithms are also proved to achieve a near-optimal performance relative to the unconstrained digital solutions, and attain reasonable gains compared with analog-only beamforming. The attained precoding gains can be also stated in terms of the coverage probability of mmWave cellular systems. For future work, it would be interesting to consider mmWave channels with random blockage between the BS and MS \cite{bai2013analysis}, and seek the design of robust adaptive channel estimation algorithms. Besides the channel estimation algorithms developed in this paper assuming fixed and known array structures, it would be also important for mmWave systems to develop efficient algorithms that adaptively estimate the channel with random or time-varying array manifolds.

\begin{small}
\bibliographystyle{IEEEtran}

\bibliography{IEEEabrv,Heathabrv,Ahmed}

\end{small}

\end{document}